\newcommand{\proofskip}{2mm}
\numberwithin{equation}{section}
\begin{document}
\setlength{\textheight}{8.0truein}    %FOR 2ND PAGE ONWARDS

\runninghead{On Quantum-Classical Equivalence
for Composed Communication Problems}
            {Alexander~A.~Sherstov}

\normalsize\textlineskip
\thispagestyle{empty}
\setcounter{page}{1}

%\copyrightheading{Vol.}{No.}{Year}{Page Nos.}
\copyrightheading{0}{0}{2010}{000--000}

\vspace*{0.88truein}

\alphfootnote

\fpage{1}

\centerline{\bf ON QUANTUM-CLASSICAL EQUIVALENCE}
\vspace*{0.035truein}
\centerline{\bf FOR COMPOSED COMMUNICATION PROBLEMS}

%\vspace*{0.37truein}
\vspace*{0.30truein}
\centerline{\footnotesize
%%%%%%%%%%%%%%%%%%%%%%%%%%%%%%%%%%%%
%put authors' name and address here
%%%%%%%%%%%%%%%%%%%%%%%%%%%%%%%%%%%%
ALEXANDER~A.~SHERSTOV}
%\vspace*{0.015truein}
\centerline{\footnotesize\it 
Department of Computer Sciences,
University of Texas at Austin
}
\baselineskip=10pt
\centerline{\footnotesize\it 
Austin, Texas 78757 USA
}
\vspace*{10pt}
%\vspace*{0.225truein}
\publisher{October 28, 2009}{January 6, 2010}

\vspace*{0.21truein}

%% \abstracts{first paragraph}{second paragraph}{third paragraph}
%% If there is only one paragraph, just keep the second and third empty 
%% like the following one 
\abstracts{
An open problem in communication complexity proposed by
several authors is to prove that for every
Boolean function $f,$ the task of computing $f(x\wedge y)$ has
polynomially related classical and quantum bounded-error complexities.
We solve a variant of this question.  For every $f,$ we
prove that the task of computing, on input \mbox{$x$ and $y,$}
\emph{both} of the quantities $f(x\wedge y)$ and $f(x\vee y)$ has
polynomially related classical and quantum bounded-error complexities.
We further show that the quantum bounded-error complexity is
polynomially related to the classical deterministic complexity
and the block sensitivity of $f.$ This result
holds regardless of prior entanglement.}{}{}

\vspace*{10pt}

\keywords{
Quantum communication complexity, lower bounds, 
quantum-classical equivalence, pattern matrix method, block sensitivity}
\vspace*{3pt}
%\communicate{???}

\vspace*{1pt}\textlineskip    %) USE THIS MEASUREMENT WHEN THERE IS
   %) A SECTION HEADING
%\vspace*{-0.5pt}
%\noindent
%%%%%%%%%%%%%%%%%%%%%%%%%%%%%%%%
%put the text of the paper here
%%%%%%%%%%%%%%%%%%%%%%%%%%%%%%%%
\section{Introduction \label{sec:intro}}

Quantum communication complexity, introduced by Yao~\cite{yao93quantum},
studies the amount of quantum communication necessary to compute a
Boolean function $F$ whose arguments are distributed among several
parties.  In the canonical setting, one considers a function
$F\colon X\times Y\to\zoo,$ where $X$ and $Y$ are some finite sets. One
of the parties, Alice, receives an input $x\in X,$ and the other
party, Bob, receives an input $y\in Y.$ Their objective is to
evaluate $F(x,y).$ To this end, Alice and Bob can exchange messages
through a shared quantum communication channel.
They can additionally take advantage of arbitrary \emph{prior
entanglement.} The cost of a communication protocol is the total number of qubits
exchanged in the worst case on any input $(x,y).$ The bounded-error
quantum communication complexity of $F$ with prior entanglement,
denoted $Q^*_{1/3}(F),$ is the least cost of a protocol that computes
$F$ correctly with probability at least $2/3$ on every input.
Quantum communication has an obvious classical counterpart, the
\emph{randomized model}, in which the parties exchange classical
bits ($0$ and $1$) and additionally share an unlimited supply of
unbiased random bits.  The bounded-error classical communication
complexity of $F,$ denoted $R_{1/3}(F),$ is the least cost of a
randomized protocol that computes $F$ correctly with probability
at least $2/3$ on every input.

A central goal of the field is to determine whether quantum
communication can be significantly more powerful than classical
communication, i.e., whether a superpolynomial gap exists between
the quantities $Q^*_{1/3}(F)$ and $R_{1/3}(F)$ for some function
$F\colon X\times Y\to\zoo.$ Exponential separations between quantum and
classical complexity are well known in several alternate models of
communication~\cite{ambainis03sampling,      % FOCS'98
raz99quantum-classical,                      % STOC'99
bcww01fingerprinting, 
baryossef-et-al04quantum-classical,          % STOC'04
gavinsky-et-al06quantum-classical,           % STOC'06
gavinsky06fingerprinting,                    % CCC'06
gkkrw07quantum-classical,                    % STOC'07
gavinsky08quantum-classical,                 % STOC'08
gavinsky-pudlak08quantum-classical-oneway},  % CCC'08
such as one-way
communication, simultaneous message passing, sampling, and computing
a partial function or relation. However, these results do not 
apply to the original question about $Q_{1/3}^*(F)$ and
$R_{1/3}(F),$ and the largest known separation between the two
quantities is the quadratic gap for the disjointness
function~\cite{razborov03quantum,aaronson-ambainis05or}.

It is conjectured that $Q^*_{1/3}(F)$ and $R_{1/3}(F)$ are
polynomially related for all $F\colon X\times Y\to\zoo.$ Despite consistent
research efforts, this conjecture appears to be beyond
the reach of the current techniques.  An intermediate
goal, proposed by several authors~\cite{buhrman-dewolf01polynomials,
klauck01quantum-journal,shi-zhu07block-composed,shi07approx-survey}
and still unattained, is to prove the conjecture for the 
class of communication problems $F\colon\zoon\times\zoon\to\zoo$ of the
form
\[ F(x,y) = f(x\wedge y) \]
for an arbitrary function $f\colon\zoon\to\zoo.$ There has 
been encouraging progress on this problem.  
In a breakthrough result, Razborov~\cite{razborov03quantum} solved
it for the special case of symmetric $f.$ Using
unrelated techniques, 
a polynomial relationship between quantum and classical
complexity was proved in~\cite{sherstov07quantum} for the broader class of problems
$F\colon\zoo^{4n}\times\zoo^{4n}\to\zoo$ given by
\[F(x,y) = f(\dots,\; (x_{i,1}y_{i,1}
\vee \cdots \vee 
x_{i,4}y_{i,4}),\; \dots) \]
for an arbitrary function $f\colon\zoon\to\zoo.$
Independently, Shi and Zhu~\cite{shi-zhu07block-composed} used
a different approach to prove a polynomial relationship between
quantum and classical communication complexity for the family of functions
$F\colon\zoo^{kn}\times\zoo^{kn}\to\zoo$ given by
\[ 
F(x,y) = f(\dots,\g(x_{i,1},y_{i,1},\dots, x_{i,k},y_{i,k}),\dots),
\]
where $f\colon\zoon\to\zoo$ is arbitrary and $\g$ is any gadget on
$2k\geq\Omega(\log n)$ variables that has certain pseudorandom analytic
properties.
More recently, Montanaro and
Osborne~\cite{montanaro-osborne09xor} studied
quantum-classical equivalence for functions of the form
$f(x\oplus y),$ where the combining function $f$ obeys
certain constraints such as monotonicity or suitable
Fourier structure.

\subsection{Our Results} 

While the above results give further evidence that quantum
and classical communication complexities are polynomially related,
it remains open to prove this conjecture for all
functions of the form $F(x,y)=f(x\wedge y).$ 
In this paper, we solve a variant of this
question. Specifically, we consider the communication
problem of computing, on input $x,y\in\zoon,$ \emph{both} of the
quantities $f(x\wedge y)$ and $f(x\vee y).$  Our main result is a
polynomial relationship between the quantum and classical complexity
of any such problem, regardless of $f.$ We further show that the
quantum complexity of any such problem is polynomially related to
its \emph{deterministic} classical complexity $D(F)$ and
to the \emph{block sensitivity} $\bs(f)$ of $f.$ A formal definition
of block sensitivity, a well-studied combinatorial complexity measure,
will be given later in Section~\ref{sec:s-bs-dt}.

%\noindent\parbox{\textwidth}{
\begin{theorem}[On quantum-classical equivalence]
Let $f\colon\zoon\to\zoo$ be arbitrary. Let $F$ denote the communication
problem of computing, on input $x,y\in\zoon,$ both of the quantities
$f(x\wedge y)$ and $f(x\vee y).$ Then
\begin{align*}
D(F) \geq R_{1/3}(F) \geq Q^*_{1/3}(F)\geq 
     \Omega(\bs(f)^{1/4})\geq \Omega(D(F)^{1/12}).
\end{align*}
\label{thm:main}
\end{theorem}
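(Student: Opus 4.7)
My plan addresses each of the three nontrivial inequalities in the chain separately.

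\textbf{Routine comparisons.} The bounds $D(F) \geq R_{1/3}(F) \geq Q^*_{1/3}(F)$ are immediate from the standard hierarchy of communication models: every deterministic protocol is a bounded-error randomized protocol, and every classical randomized protocol can be simulated by a quantum protocol with prior entanglement at the same cost.

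\textbf{Closing the chain via decision trees.} To obtain $\bs(f)^{1/4} \geq \Omega(D(F)^{1/12})$ it suffices to prove $D(F) \leq O(\bs(f)^3)$. My plan is to first observe $D(F) \leq O(D(f))$ by simulating an optimal depth-$D(f)$ decision tree for $f$ in the communication setting: each query to a bit of $x \wedge y$ or $x \vee y$ costs only a constant number of bits, since Alice sends $x_i$ and Bob sends $y_i$. Combined with the classical polynomial relationship $D(f) = O(\bs(f)^3)$ of Beals, Buhrman, Cleve, Mosca, and de~Wolf, this finishes the chain.

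\textbf{The quantum lower bound.} The crux of the theorem is the bound $Q^*_{1/3}(F) \geq \Omega(\bs(f)^{1/4})$, and I would approach it in two stages. \emph{(i) Extracting a folded function $g$.} Let $b = \bs(f)$, witnessed by some $z \in \zoon$ and disjoint blocks $B_1, \ldots, B_b$. Define $g : \zoo^b \to \zoo$ by
\[
g(w) \;=\; f\Bigl(z \;\oplus\; w_1 \mathbf{1}_{B_1} \oplus \cdots \oplus w_b \mathbf{1}_{B_b}\Bigr),
\]
so that $g(\mathbf{0}) = f(z)$ and $g(e_i) \neq g(\mathbf{0})$ for every $i$; in particular $g$ has sensitivity $b$ at the origin. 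I would then reduce the joint problem $F_g(u, v) = (g(u \wedge v), g(u \vee v))$ on $b$-bit inputs to $F$ by setting, for each $j \in B_i$, $(x_j, y_j) = (u_i, v_i)$ if $z_j = 0$ and $(x_j, y_j) = (\neg u_i, \neg v_i)$ if $z_j = 1$, and $x_j = y_j = z_j$ outside the blocks. A direct calculation shows the two outputs $f(x \wedge y)$ and $f(x \vee y)$ jointly determine the pair $(g(u \wedge v), g(u \vee v))$: on blocks with monochromatic $z$ the correspondence is exact, while on mixed blocks the AND and OR are merely swapped, which is harmless since $F$ returns both values. Hence $Q^*_{1/3}(F) \geq Q^*_{1/3}(F_g)$. \emph{(ii) The pattern matrix method.} Since $g$ has sensitivity $b$ at $\mathbf{0}$, the Nisan--Szegedy bound gives approximate polynomial degree $\Omega(\sqrt{b})$. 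I would embed a pattern-matrix structure inside the communication matrix of $F_g$: the fact that flipping single coordinates of $u$ (or $v$) activates sensitive directions of $g$ allows one to simulate a selector-mask pattern matrix for a suitable hard base function (for instance a symmetric function like $\mathrm{OR}_{\sqrt b}$). Applying the pattern matrix quantum lower bound and accounting for the extra square-root loss that arises because AND/OR gadgets embed only a weakened form of the pattern matrix (as opposed to the cleaner XOR gadget) then yields $Q^*_{1/3}(F_g) \geq \Omega(b^{1/4})$.

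\textbf{Main obstacle.} The most technically demanding step will be the pattern-matrix embedding in stage (ii): one must verify that the joint AND/OR problem for $g$ is rich enough to host a (suitably weakened) pattern matrix and then carry out the corresponding generalized discrepancy calculation. The obstruction is that AND/OR does not preserve the selector-plus-mask structure of the pattern matrix as transparently as XOR does, which is exactly why the final exponent is $1/4$ rather than the $1/2$ that a direct lifting would suggest. A secondary concern is confirming that the non-monochromatic case of $z$ handled in stage (i) does not leak additional loss through the discrepancy calculation beyond the swap already accounted for.
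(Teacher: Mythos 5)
Your routine comparisons and the decision-tree upper bound $D(F)\leq O(\dt(f))\leq O(\bs(f)^3)$ match the paper. The quantum lower bound, however, has two genuine gaps. First, your stage (i) reduction fails on blocks where $z$ is not monochromatic. Fix such a block $B_i$ containing coordinates $j$ with $z_j=0$ and $j'$ with $z_{j'}=1$. Under your encoding, the $j$-th coordinate of $x\wedge y$ equals $(u\wedge v)_i$ while the $j'$-th coordinate equals $\neg u_i\wedge\neg v_i = 1\oplus(u\vee v)_i$; unless $u_i=v_i$ these disagree, so the argument fed to $f$ is not of the form $z\oplus\bigoplus w_i\mathbf{1}_{B_i}$ for \emph{any} $w$, and $f(x\wedge y)$ computes neither $g(u\wedge v)$ nor $g(u\vee v)$. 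The AND/OR "swap" is not between the two outputs of $F$ but happens \emph{inside a single block}, which is fatal. The paper's resolution is combinatorial: each mixed block $S_i=A_i\cup B_i$ is collapsed (by identifying variables) to a two-element block, yielding a subfunction $\g$ with $\bs_2(\g)\geq\bs(f)$, and the Kenyon--Kutin theorem $\s(\g)\geq\alpha\sqrt{\bs_2(\g)}$ then produces $\Omega(\sqrt{\bs(f)})$ sensitive \emph{singleton} coordinates at some $z$, at least half of which are zeros of $z$ (giving the $x\wedge y$ case) or ones of $z$ (giving the $x\vee y$ case). This Kenyon--Kutin step is exactly where the exponent degrades from $1/2$ to $1/4$ --- not, as you suggest, from a "weakened pattern matrix embedding"; the paper's pattern-matrix argument loses nothing beyond the square root of the sensitivity.

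Second, stage (ii) as sketched would not go through. The matrix $[\g(u\wedge v)]_{u,v}$ is \emph{not} a pattern matrix: its rows are applications of the distinct induced functions $\g_v(u)=\g(u\wedge v)$, so Theorem~\ref{thm:pattern-spectrum} does not apply to it, and one cannot simply "simulate a pattern matrix for $\mathrm{OR}_{\sqrt b}$." The paper's key idea is to work entirely on the dual side: by LP duality plus symmetrization (Theorem~\ref{thm:duality}) there is a \emph{single} dual witness $\psi$ orthogonal to all characters of order below $\delta\sqrt k$ that correlates with \emph{every} function $h$ satisfying $h(e_1)=\cdots=h(e_k)\neq h(0)$. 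The dual matrix $\Psi$ built from $\psi$ \emph{is} a pattern matrix (so its spectral norm is controlled), and it correlates with a suitable submatrix $M$ of $[\g(x\wedge y)]$ even though $M$ itself has no pattern structure; the generalized discrepancy method then gives $\Omega(\sqrt k)$. Your proposal correctly flags stage (ii) as the hard part but is missing this common-dual-witness mechanism, which is the central new ingredient of the proof.
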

%}

A corollary of Theorem~\ref{thm:main} is that given any $f,$ a
polynomial relationship between the classical and quantum complexities
is assured for at least one of the communication problems $f(x\wedge
y),$ $f(x\vee y).$ More precisely, we have:

\begin{corollary}
Let $f\colon\zoon\to\zoo$ be arbitrary. Let $F_1$ and $F_2$ denote the
communication problems of computing $f(x\wedge y)$ and $f(x\vee
y),$ respectively. Then either
\begin{align}
D(F_1) \geq R_{1/3}(F_1) \geq Q^*_{1/3}(F_1)\geq 
     \Omega(\bs(f)^{1/4})\geq \Omega(D(F_1)^{1/12})
	 \label{eqn:F1bigger}
\end{align}
or
\begin{align}
D(F_2) \geq R_{1/3}(F_2) \geq Q^*_{1/3}(F_2)\geq 
     \Omega(\bs(f)^{1/4})\geq \Omega(D(F_2)^{1/12})
	 \label{eqn:F2bigger}
\end{align}
or both.
\end{corollary}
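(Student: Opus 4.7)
The plan is to derive the corollary directly from Theorem~\ref{thm:main} by observing two simple relations between $F$ and the pair $F_1,F_2$: first, a protocol for $F$ can be built from separate protocols for $F_1$ and $F_2$; second, $F_i$ is a sub-problem of $F$ in every computational model, since a protocol that outputs both $f(x\wedge y)$ and $f(x\vee y)$ trivially outputs either one alone. All the work will be packaged into picking whichever of $F_1,F_2$ inherits the quantum lower bound from $F$.

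For the quantum lower bound, I would take any two quantum protocols computing $F_1$ and $F_2$, amplify each to error $1/6$ by the standard constant-factor repetition, and run them back-to-back; by a union bound the resulting protocol computes $F$ with error at most $1/3.$ This yields
\[
Q^*_{1/3}(F) \;\leq\; O\bigl(Q^*_{1/3}(F_1)\bigr) + O\bigl(Q^*_{1/3}(F_2)\bigr),
\]
so at least one index $i\in\{1,2\}$ satisfies $Q^*_{1/3}(F_i)\geq \Omega(Q^*_{1/3}(F))\geq \Omega(\bs(f)^{1/4})$, where the final inequality is the nontrivial quantum lower bound from Theorem~\ref{thm:main}.

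For the matching deterministic upper bound, I would note that $D(F_i)\leq D(F)$ since $F_i$ is a sub-problem of $F$; rearranging the last inequality of Theorem~\ref{thm:main} gives $D(F)\leq O(\bs(f)^3)$, and therefore $D(F_i)\leq O(\bs(f)^3)$, i.e.\ $\bs(f)^{1/4}\geq \Omega(D(F_i)^{1/12})$. Combined with the trivial chain $D(F_i)\geq R_{1/3}(F_i)\geq Q^*_{1/3}(F_i)$, this establishes (\ref{eqn:F1bigger}) for the chosen $i=1$ or (\ref{eqn:F2bigger}) for $i=2$. There is no substantive obstacle here; the only mildly delicate point is the constant-factor amplification before combining the two quantum protocols, which leaves every asymptotic bound intact.
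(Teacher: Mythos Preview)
Your proposal is correct and follows essentially the same route as the paper. The paper's proof is a one-liner---``Theorem~\ref{thm:main} implies (\ref{eqn:F1bigger}) if $Q^*_{1/3}(F_1)\geq Q^*_{1/3}(F_2)$ and implies (\ref{eqn:F2bigger}) otherwise''---which leaves implicit exactly the two observations you spell out: that $Q^*_{1/3}(F)$ is at most a constant times $Q^*_{1/3}(F_1)+Q^*_{1/3}(F_2)$ (so the larger of the two inherits the quantum lower bound), and that $D(F_i)\leq D(F)$ (so each inherits the deterministic upper bound).
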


\begin{proof}
{
Theorem~\ref{thm:main} implies (\ref{eqn:F1bigger}) if
$Q^*_{1/3}(F_1)\geq Q^*_{1/3}(F_2)$ and implies (\ref{eqn:F2bigger})
otherwise.
}
\end{proof}

\begin{remark}
{\rm 
As a matter of formalism, the communication problem in
Theorem~\ref{thm:main} can be expressed in standard
form $F\colon X\times Y\to\zoo$ by introducing an
additional bit $b\in\zoo$ to indicate the desired
output, i.e., $f(x\wedge y)$ or $f(x\vee y).$ 
}
\end{remark}

Apart from giving a polynomial relationship between the quantum and
classical complexity of our functions, Theorem~\ref{thm:main}
shows that prior entanglement does not affect their
quantum complexity by more than a polynomial. It is an
open problem~\cite{buhrman-dewolf01polynomials} to prove a polynomial
relationship for quantum communication complexity with and without prior
entanglement, up to an additive logarithmic term.
Known separations here are quite modest: entanglement allows
for a factor of $2$ savings via superdense coding, as
well as an additive $\Theta(\log n)$ savings for the
equality function.
Finally, we prove in Section~\ref{sec:logrank} that the communication
problems in Theorem~\ref{thm:main} satisfy another well-known conjecture, the 
\emph{log-rank conjecture} of Lov{\'a}sz and Saks~\cite{lovasz-saks88logrank}. 

Up to this point, we have focused on the communication problem of
computing \mbox{$f(x\wedge y)$} and $f(x\vee y).$ In
Section~\ref{sec:composed}, we consider quantum-classical equivalence
and the log-rank conjecture in a broader context.  Specifically,
we consider general compositions of the form
$f(\dots,\g_i(x^{(i)},y^{(i)}),\dots),$ where one has a combining
function $f\colon \zoon\to\zoo$ that receives input from intermediate
functions $\g_i\colon X_i\times Y_i\to\zoo,$ $i=1,2,\dots,n.$ We
show that under natural assumptions on $\g_1,\dots,\g_n,$ the
composed function will have polynomially related quantum and classical
bounded-error complexities and will satisfy the log-rank conjecture.

\subsection{Our Techniques}

We obtain our main result by bringing together \emph{analytic} and
\emph{combinatorial} views of the uniform approximation of
Boolean functions. The analytic approach and combinatorial approach
have each found important applications in
isolation, e.g.,~\cite{nisan-szegedy94degree,
beals-et-al01quantum-by-polynomials,   % FOCS'98
buhrman-dewolf01polynomials,           % CCC'01
razborov03quantum, 
sherstov07quantum,
shi-zhu07block-composed}. The key to our work is to find a way
to combine them.

On the analytic side, a key ingredient in our solution is
the \emph{pattern matrix method}, developed in~\cite{sherstov07ac-majmaj,
sherstov07quantum}.
Let $f\colon\zoon\to\zoo$ be a given function. The pattern matrix method
centers around a communication game in which Alice is given a string
$x\in\zoo^N,$ where $N\geq 4n$; Bob is given a subset
$S\subset\{1,2,\dots,N\},$ where $|S|=n$; and their objective is
to compute $f(x|_S),$ where $x|_S=(x_{i_1},\dots,x_{i_n})\in\zoon$
and $i_1<\cdots<i_n$ are the elements of $S.$ 
The pattern matrix method gives a lower bound on
the communication complexity of this problem in a given model (e.g.,
randomized, bounded-error quantum with prior entanglement,
unbounded-error, weakly-unbounded error) in terms of the 
corresponding analytic property
of $f$ (e.g., its approximate degree or threshold degree).

Essential to the pattern matrix method,
as applied in this paper, is a closed-form expression
for the singular values of every matrix of the form
\begin{align}
\Psi=\Big[\psi(x|_S\,\oplus\, w)\Big]_{x,(S,w)} \label{eqn:Psi}
\end{align}
in terms of the Fourier spectrum of the function $\psi\colon\zoon\to\Re,$
where $x$ and $S$ are as described in the previous paragraph and
$w$ ranges over $\zoon.$ The method critically
exploits the fact that the rows of $\Psi$ are applications of the
\emph{same} function $\psi$ to various subsets of the variables or
their negations.  In the communication problems of this paper,
this assumption is violated: as Bob's input $y$ ranges
over $\zoon,$ the induced functions $f_y(x)=f(x\wedge y)$ 
may have nothing to do with each other.  This
obstacle is fundamental: allowing a distinct function $\psi$ in
each row of~(\ref{eqn:Psi}) disrupts the spectral
structure of $\Psi$ and makes it impossible to force the desired
spectral bounds.

We overcome this obstacle by exploiting the additional
\emph{combinatorial} structure of the base function $f\colon\zoon\to\zoo,$
which did not figure in previous work~\cite{sherstov07ac-majmaj,sherstov07quantum}.
Specifically, we consider the sensitivity of $f$, the block
sensitivity of $f,$ and their polynomial equivalence in our restricted setting, as proved by Kenyon
and Kutin~\cite{ell-block-sensitivity04}.  We use this combinatorial
structure to identify a large submatrix inside $[f(x\wedge y)]_{x,y}$ or
$[f(x\vee y)]_{x,y}$ which, albeit not directly representable
in the form (\ref{eqn:Psi}), has a certain \emph{dual} matrix that
can be represented precisely in this way.  Since the pattern matrix
method relies only on the spectral structure of this dual matrix,
we are able to achieve our goal and place a strong lower bound on
the quantum communication complexity. The corresponding upper bound for
classical protocols has a short proof using a well-known argument in the
literature~\cite{bcw98quantum, beals-et-al01quantum-by-polynomials,
razborov03quantum, sherstov07quantum, 
shi-zhu07block-composed}.

The above program can be equivalently described in
terms of polynomials rather than functions. Let $\mathcal{F}$ be a
subset of Boolean functions $\zoon\to\zoo$ none of which can be
approximated within $\epsilon$ in the $\ell_\infty$ norm by a
polynomial of degree less than $d.$ For each $f\in \mathcal{F},$
linear programming duality implies the existence of a function
$\psi\colon\zoon\to\Re$ such that $\sum_{x\in\zoon}
\psi(x)f(x)>\epsilon \sum_{x\in\zoon}|\psi(x)|$
and $\psi$ has zero Fourier mass on the characters of
order less than $d.$ This dual object $\psi$
witnesses the fact that $f$ has no low-degree approximant. Now,
there is no reason to believe that a \emph{single} witness
$\psi$ can be found that works for every function in
$\mathcal{F}.$ A key technical challenge in this work is to show
that, under suitable combinatorial constraints that hold in our
setting, the family $\mathcal{F}$ will indeed have a common witness
$\psi.$ In conjunction with the pattern matrix method,
we are then able to solve the original problem.
To clarify the relevance of this discussion to the
study of functions of the form $f(x\wedge y),$
the family $\mathcal{F}$ in question is the family of the
induced functions $f_y(x)=f(x\wedge y)$ 
as the input $y$ ranges over $\zoon.$

\section{Preliminaries} \label{sec:prelim}

For convenience of notation, we will view Boolean functions in the
remainder of the paper as mappings $f\colon X\to\moo$ for some finite set
$X,$ where $-1$ corresponds to ``true.'' Note that this is a departure
from the introduction, where we used the more traditional
range $\{0,1\}.$ For $x\in\zoon,$ we define $|x|=
x_1+x_2+\cdots+x_n.$ The symbol $P_d$ stands for the set of all
univariate real polynomials of degree at most $d.$ For a given
function $f\colon\zoon\to\Re$ and a string $z\in\zoon,$ we let $f_z$
stand for the function $f_z\colon\zoon\to\Re$ given by $f_z(x)\equiv
f(x\oplus z).$ For $b\in\zoo,$ we use the notation $\overline b =
1-b = 1\oplus b.$ The \emph{characteristic vector} of a set
$S\subseteq\oneton$ is the string $\1_S\in\zoon$ such that $(\1_S)_i=1$
for $i\in S,$ and $(\1_S)_i=0$ otherwise.
For a string $x\in\zoon$ and a set $S\subseteq\oneton,$ we define
$x|_{S}=(x_{i_1},x_{i_2},\dots,x_{i_{|S|}})\in\zoo^{|S|},$ where
$i_1<i_2<\cdots<i_{|S|}$ are the elements of $S.$

\subsection{Matrices}
The symbol $\Re^{m\times n}$ refers to the family of all $m\times
n$ matrices with real entries.  
We specify a matrix by its generic entry, e.g., the notation 
$A=[F(i,j)]_{i,j}$ means that the $(i,j)$th entry of $A$ is given by the
expression $F(i,j).$ In most matrices that arise in this work, the
exact ordering of the columns (and rows) is irrelevant. In such cases
we describe a matrix by the notation $[F(i,j)]_{i\in I,\, j\in J},$
where $I$ and $J$ are some index sets.

Let $A=[A_{ij}]\in\Re^{m\times n}$ be given. We adopt the shorthands 
$\|A\|_\infty = \max |A_{ij}|$ and $\|A\|_1=\sum |A_{ij}|.$ We denote the
singular values of $A$ by
$\sigma_1(A)\geq\sigma_2(A)\geq\cdots\geq\sigma_{\min\{m,n\}}(A)\geq0.$
Recall that the {spectral norm} of $A$ is given by
\begin{align*}
\|A\|= \max_{x\in\Re^n,\; \|x\|_2=1} \|Ax\|_2 = \sigma_1(A),
\end{align*}
where $\|\cdot\|_2$ is the Euclidean vector norm.
For $A,B\in\Re^{m\times n},$ we write $\langle A, B\rangle =
\sum A_{ij}B_{ij}.$
We denote the rank of $A$ over the reals by $\rk A.$ 

We will need the following formulation of linear
programming duality in matrix notation.

\begin{theorem}[Duality]
\label{thm:lp}
For $A\in\Re^{m\times n}$ and $b\in\Re^m,$ the system
$Ax\geq b$ has no solution in $x\in\Re^n$ if and only
if there is a vector $y\in[0,\infty)^m$ such that $y\tr
A=0$ but $y\tr b>0.$
\end{theorem}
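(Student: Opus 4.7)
The statement is a standard form of Farkas's lemma; I would prove it by applying the separating hyperplane theorem to a suitable polyhedral cone.

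The easy direction --- that existence of such a $y$ precludes feasibility of $Ax\geq b$ --- is immediate: if $x$ satisfies $Ax\geq b,$ then multiplying on the left by the nonnegative row $y\tr$ yields $0=y\tr Ax\geq y\tr b>0,$ a contradiction.

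For the nontrivial direction, my plan is to introduce the set
\[
C=\{Ax-s : x\in\Re^n,\; s\in[0,\infty)^m\}\subseteq\Re^m,
\]
observe that $b\in C$ is precisely the statement that $Ax\geq b$ is feasible, and then separate $b$ from $C.$ The set $C$ is a polyhedral convex cone --- it is the Minkowski sum of the subspace $A(\Re^n)$ and the cone $-[0,\infty)^m$ --- and hence is closed. Applying the separating hyperplane theorem to the closed convex set $C$ and the point $b\notin C$ produces $y\in\Re^m$ and $\alpha\in\Re$ with $y\tr c\leq\alpha$ for every $c\in C$ and $y\tr b>\alpha.$ Because $C$ is a cone containing $0,$ any $c\in C$ with $y\tr c>0$ could be scaled arbitrarily to violate $y\tr c\leq\alpha,$ so in fact $y\tr c\leq 0$ on all of $C;$ taking $c=0$ then gives $\alpha\geq 0,$ hence $y\tr b>0.$ Substituting $c=\pm Ax$ into $y\tr c\leq 0$ forces $y\tr A=0,$ and substituting $c=-e_i$ (which lies in $C$ since $0=A\cdot 0$ and $e_i\in[0,\infty)^m$) forces $y_i\geq 0.$ Thus $y$ has all the required properties.

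The one nontrivial ingredient is the closedness of $C,$ which is what licenses strict separation of $b$ from $C.$ This is a standard fact about finitely generated (polyhedral) convex cones and can be proved by induction on the number of generators or by a compactness argument on extreme rays; with that in hand, the remaining steps are essentially bookkeeping.
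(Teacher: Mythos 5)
Your argument is correct, and it is worth noting that the paper itself offers no proof of this statement: Theorem~\ref{thm:lp} is presented as a standard formulation of linear programming duality (a variant of Farkas's lemma), with the reader referred to Schrijver's monograph for background and proofs. Your proof is the classical separation argument: feasibility of $Ax\geq b$ is recast as membership of $b$ in the polyhedral cone $C=\{Ax-s : x\in\Re^n,\ s\in[0,\infty)^m\},$ and the separating functional obtained from $b\notin C$ is upgraded using the cone structure. The individual deductions all check out --- taking $c=\pm Ax$ forces $y\tr A=0,$ taking $c=-e_i$ forces $y_i\geq 0,$ and homogeneity pushes the separation level to $0$ so that $y\tr b>0.$ The one ingredient you leave unproved, the closedness of the finitely generated cone $C$ (which is what justifies strict separation of the point $b$ from $C$), is indeed the only nontrivial step; you correctly identify it as a standard fact (Minkowski--Weyl), and supplying a reference or a short induction on generators would make the argument complete.
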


\noindent
The monograph by Schrijver~\cite[Chap.~7]{lpbook}
provides detailed background on Theorem~\ref{thm:lp} and
various other formulations of linear programming
duality, along with historical notes.

\subsection{Fourier Transform}
Consider the vector space of real functions on $\zoon,$ equipped with
the inner product
\[\langle f,\g\rangle =2^{-n} \sum_{x\in\zoon}f(x)\g(x)\]
and normed by
\[ \| f \|_\infty = \max_{x\in\zoon} |f(x)|. \]
For $S\subseteq\oneton,$ define $\chi_S\colon\zoon\to\moo$ by
$\chi_S(x) =(-1)^{\sum_{i\in S} x_i}.$
Then the functions $\chi_S,$ $S\subseteq\oneton,$ form an orthonormal
basis for the inner product space in question.  As a result, every
function $f\colon\zoon\to\Re$ has a unique representation of the form
\[f=\sum_{S\subseteq\oneton} \hat f(S)\,\chi_S,\] where $\hat
f(S)=\langle f,\chi_S\rangle$ is the \emph{Fourier coefficient} of $f$
that corresponds to the character $\chi_S.$ 
%
%The orthogonality of the characters implies the following identity,
%known as \emph{Parseval's theorem:}
%\begin{align}
%   \sum_{S\subseteq\oneton} \hat f(S)^2 
%   = 2^{-n}\sum_{x\in\zoon}f(x)^2.
% \label{eqn:parsevals}
%\end{align}
The following bound is immediate from the definition of Fourier
coefficients:
\begin{align}
   \max_{S\subseteq\oneton} |\hat f(S)| \leq 2^{-n} \sum_{x\in\zoon}
|f(x)|. 
\label{eqn:trivial-Fourier-bound}
\end{align}

\subsection{Monomial Count, Sensitivity, and Decision Trees}
\label{sec:s-bs-dt}

Every function $f\colon\zoon\to\Re$ has a unique representation of the form
\[f(x) = \sum_{S\subseteq\{1,\dots,n\}} \alpha_S \prod_{i\in S} x_i\]
for some reals $\alpha_S.$ We define the \emph{degree} of $f$ by
$\deg(f) = \max\{ |S|: \alpha_S\ne
0\}$ and the \emph{number of monomials} in $f$ by 
$\mon(f) = |\{ S: \alpha_S\ne 0\}|.$ 

For $i=1,2,\dots,n,$ we let $e_i\in\zoon$ stand for the vector with
$1$ in the $i$th component and zeroes everywhere else.
For a set $S\subseteq\{1,\dots,n\},$ we define $e_S\in\zoon$ by 
$e_S = \sum_{i\in S} e_i.$ In particular, $e_{\varnothing}=0.$
Fix a Boolean function $f\colon\zoon\to\moo.$ For $\ell=1,2,\dots,n,$
the \emph{$\ell$-block sensitivity} of $f,$ denoted $\bs_{\ell}(f),$
is defined as the largest $k$ for which there exist 
nonempty disjoint sets
$S_1,\dots,S_k\subseteq\{1,\dots,n\},$ each containing no more than
$\ell$ elements, such that 
\[ 
   f(z\oplus e_{S_1})=
   f(z\oplus e_{S_2})=
   \cdots=
   f(z\oplus e_{S_k})\ne
   f(z)
\]
for some $z\in\zoon.$ One distinguishes two extremal cases. The
\emph{sensitivity} of $f,$ denoted $\s(f),$ is defined by $\s(f)
=\bs_1(f).$ The \emph{block sensitivity} of $f,$ denoted $\bs(f),$
is defined by $\bs(f) = \bs_n(f).$ In this context, the term
\emph{block} simply refers to a subset $S\subseteq\oneton.$ We say
that a block $S\subseteq\oneton$ is \emph{sensitive} for $f$ on input
$z$ if $f(z)\ne f(z\oplus e_{S}).$

Following Buhrman and de Wolf~\cite{buhrman-dewolf01polynomials}, we define one additional
variant of sensitivity. The \emph{zero block sensitivity} of $f,$ denoted
$\zbs(f),$ is the largest $k$ for which there exist 
nonempty disjoint sets
$S_1,\dots,S_k\subseteq\{1,\dots,n\}$ such that 
\[ 
   f(z\oplus e_{S_1})=
   f(z\oplus e_{S_2})=
   \cdots=
   f(z\oplus e_{S_k})\ne
   f(z)
\]
for some $z\in\zoon$ with $z|_{S_1\cup \cdots\cup S_k}=(0,0,\dots,0).$

For a function $f\colon\zoon\to\moo,$ we let $\dt(f)$ stand for the least depth
of a decision tree for $f.$ The following inequalities are known.

\begin{theorem}[{Beals et
al.~\cite[\S5]{beals-et-al01quantum-by-polynomials}}] 
\label{thm:beals}
Every function $f\colon\zoon\to\moo$ satisfies
\begin{align*}
\dt(f)\leq \bs(f)^3.
\end{align*}
\end{theorem}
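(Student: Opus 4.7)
The plan is to derive the bound from two classical intermediate inequalities involving the certificate complexity $C(f)$, defined as the maximum over $x\in\zoon$ of the minimum size of a partial assignment consistent with $x$ whose every extension has $f$-value $f(x)$. I would prove $(\mathrm{i})$ $\dt(f)\leq C(f)\cdot\bs(f)$ and $(\mathrm{ii})$ $C(f)\leq\s(f)\cdot\bs(f)\leq\bs(f)^2$, and compose them to obtain $\dt(f)\leq\bs(f)^3$.

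For step $(\mathrm{ii})$, given any input $x$, I would select a maximal family of pairwise disjoint \emph{minimal} sensitive blocks $B_1,\dots,B_k$ for $f$ at $x$. By definition $k\leq\bs(f)$, and each minimal sensitive block $B_i$ has size at most $\s(f)$: at the flipped input $x\oplus e_{B_i}$, every coordinate $j\in B_i$ is individually sensitive, because flipping $j$ produces $x\oplus e_{B_i\setminus\{j\}}$, whose $f$-value equals $f(x)$ by minimality and hence differs from $f(x\oplus e_{B_i})$. Thus $T:=B_1\cup\cdots\cup B_k$ has at most $\bs(f)^2$ coordinates, and $x|_T$ is a certificate for $f$ at $x$: any $x'$ agreeing with $x$ on $T$ but satisfying $f(x')\neq f(x)$ would exhibit a sensitive block $\{i:x_i\neq x_i'\}$ disjoint from $T$, and any minimal subblock of it would contradict maximality.

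For step $(\mathrm{i})$, I would build a decision tree recursively. If $f$ is constant, output its value. Otherwise, pick any $1$-input $y$ of $f$, take a minimum $1$-certificate $C_y$ of size at most $C(f)$, query all its variables, and recurse on the restriction $f|_\rho$ at each leaf. On the leaf with $x|_{C_y}=y|_{C_y}$, the restriction $f|_\rho$ is constantly $1$. At any other leaf, $x$ disagrees with $y$ on a nonempty $B\subseteq C_y$, and I would show that the block sensitivity of $f|_\rho$ at its $0$-inputs is strictly less than that of $f$: any $k$ disjoint sensitive blocks at a $0$-input of $f|_\rho$ lift to $k$ disjoint sensitive blocks at the corresponding $0$-input of $f$ (they lie outside the queried set $C_y$), and $B$ contributes an additional sensitive block there, because flipping $B$ produces an input agreeing with $y$ on all of $C_y$ and hence of $f$-value $1$. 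By induction the recursion terminates in at most $\bs(f)$ levels, each contributing at most $C(f)$ queries; the tree also correctly handles $1$-inputs, which either match a queried certificate en route or reach the constant-$1$ base case once the $0$-input block sensitivity drops to zero. The main obstacle is precisely this lifting step, for which the crucial observation is that $B\subseteq C_y$ while the lifted blocks lie in $\oneton\setminus C_y$, making disjointness automatic.
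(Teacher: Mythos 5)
Your proof is correct and follows exactly the standard route behind the cited result: the paper does not prove Theorem~\ref{thm:beals} but imports it from Beals et al., and your decomposition into $\dt(f)\leq C(f)\cdot\bs(f)$ together with Nisan's bound $C(f)\leq \s(f)\cdot\bs(f)\leq \bs(f)^2$ is precisely the argument given there. The key steps — minimality of the blocks forcing $|B_i|\leq \s(f)$, maximality of the disjoint family yielding a certificate, and the lifting argument showing that each round of certificate queries decreases the block sensitivity over $0$-inputs by one — are all handled correctly.
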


\begin{theorem}
[Midrijanis~\cite{midrijanis04dt-versus-polynomials}]
\label{thm:dt-deg}
Every function $f\colon\zoon\to\moo$ satisfies
\begin{align*}
\dt(f)\leq O(\deg(f)^3).
\end{align*}
\end{theorem}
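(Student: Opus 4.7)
The cited statement is $\dt(f) \leq O(\deg(f)^3)$. The two easy ingredients at hand are Nisan--Szegedy's $\bs(f) \leq 2\deg(f)^2$ and Theorem~\ref{thm:beals}'s $\dt(f)\leq\bs(f)^3$. Composing them black-box yields only $\dt(f)\leq O(\deg(f)^6)$, which falls short of the claimed cubic bound by a factor of $\deg(f)^3$. So the plan cannot be to combine the two inequalities; it must be to interleave their ideas into a single recursive construction that pays $O(d^2)$ queries per unit drop in degree rather than $O(d^2)$ queries per unit of block sensitivity.

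Concretely, I would induct on $d=\deg(f)$, with the inductive hypothesis $\dt(f)\leq C d^3$ for an absolute constant $C$. In the inductive step, it suffices to produce a prefix of $O(d^2)$ queries at the root of the decision tree such that, along \emph{every} resulting branch, the residual function $f'$ satisfies $\deg(f')\leq d-1$; then $\dt(f)\leq O(d^2)+C(d-1)^3\leq Cd^3$ for $C$ chosen large enough. Reducing the theorem to this single step amounts to proving the following.

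\medskip
\noindent\textbf{Key Lemma (to prove).}\emph{ For every $f\colon\zoon\to\moo$ with $\deg(f)=d\geq 1$, there is a set $T\subseteq\oneton$ of size $|T|\leq O(d^2)$ such that $\deg(f|_{T\to\tau})\leq d-1$ for every $\tau\in\zoo^T$.}

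\medskip

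To prove the Key Lemma, I would examine the collection $\mathcal{M}$ of monomials of degree exactly $d$ in the unique multilinear representation of $f$ and attempt to find a small set $T$ of coordinates that "kills" all of $\mathcal{M}$ under every restriction. In the favourable case that $\mathcal{M}$ admits a small hitting set of size $O(d^2)$, taking $T$ to be that hitting set works by construction. In the opposite case a sunflower/matching-type argument should extract many pairwise-disjoint degree-$d$ monomials from $\mathcal{M}$, each of which is a candidate sensitive block for $f$; ultimately this should contradict Nisan--Szegedy by forcing $\bs(f)>2d^2$. The main obstacle is calibrating this dichotomy so that it delivers exactly the $|T|=O(d^2)$ bound: a naive greedy hitting set scales with $|\mathcal{M}|$, which can be as large as $\binom{n}{d}$, and a naive sunflower bound gives the wrong exponent. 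Balancing the "concentrated" and "spread-out" regimes of $\mathcal{M}$ — and verifying that killing the top-degree monomials of the polynomial genuinely translates into reducing $\deg(f)$ on every restriction branch, not merely on average — is where the real work lies; everything else is routine induction bookkeeping.
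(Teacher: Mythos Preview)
The paper does not prove this theorem; it merely cites Midrijanis. So the comparison must be against Midrijanis's own argument.

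Your proposal has a genuine quantitative gap in the Key Lemma. Your intended dichotomy is: either the maxonomials admit an $O(d^2)$ hitting set, or one can extract more than $2d^2$ pairwise disjoint maxonomials and contradict $\bs(f)\le 2\deg(f)^2$. But these two cases do not meet. The disjoint-monomial side is correct --- by the standard lemma that every maxonomial of $f$ contains a sensitive block at any given input, $k$ pairwise disjoint maxonomials yield $k$ disjoint sensitive blocks, so indeed at most $\bs(f)\le 2d^2$ maxonomials can be pairwise disjoint. However, in a $d$-uniform hypergraph a maximal matching of size $m$ only guarantees a hitting set of size $md$, so from ``$\le 2d^2$ disjoint maxonomials'' you get a hitting set of size $\le 2d^3$, not $O(d^2)$. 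Plugging that into your induction yields $\dt(f)\le O(d^4)$. You flag this yourself (``a naive sunflower bound gives the wrong exponent''), but no idea is offered to close the factor-$d$ gap, and I do not see one along these lines; it is not clear your Key Lemma is even true as stated.

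Midrijanis's actual proof bypasses your Key Lemma with a \emph{global} rather than per-level analysis. The algorithm is simply: while the current restriction $f'$ is non-constant, pick any maxonomial $M$ of $f'$ and query all of $M$ (at most $d$ variables, since $\deg(f')\le d$). On the branch consistent with the true input $x$, each such $M$ contains a sensitive block $B\subseteq M$ for $f'$ at $x$, hence for $f$ itself at $x$; and the blocks produced at successive steps are pairwise disjoint because each lies inside the freshly queried $M$. Therefore the number of iterations is at most $\bs(f,x)\le\bs(f)\le 2d^2$, and the total number of queries is at most $d\cdot\bs(f)\le 2d^3$. The crucial difference from your scheme is that the $\bs(f)$ budget is spent \emph{once across the entire run}, not once per degree level; your per-level induction would have to pay it $d$ times, which is exactly the factor you are missing.
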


For further background on these combinatorial complexity measures, we refer
the reader to the excellent survey by Buhrman and de 
Wolf~\cite{buhrman-dewolf02DT-survey}.

\subsection{Symmetric Functions}
Let $S_n$ denote the symmetric group on $n$ elements. For $\sigma\in S_n$ and
$x\in\zoon$, we denote by $\sigma x$ the string
$(x_{\sigma(1)},\ldots,x_{\sigma(n)})\in\zoon.$
A function $\phi\colon\zoon\to\Re$ is called
\emph{symmetric} if $\phi(x) = \phi(\sigma x)$ for every $x\in\zoon$ and
every $\sigma\in S_n.$ Equivalently, $\phi$ is symmetric if $\phi(x)$
is uniquely determined by $|x|.$ Observe that for every $\phi\colon\zoon\to\Re$
(symmetric or not), the derived function
\[ \phi\symm(x)    =    \Exp_{\sigma\in S_n} [\phi(\sigma x)]
\]
is symmetric.     Symmetric functions on $\zoon$ are intimately
related to univariate polynomials, as demonstrated by Minsky and
Papert's \emph{symmetrization argument}~\cite{minsky88perceptrons}:

\begin{proposition}[Minsky and Papert]
Let $\phi\colon\zoon\to\Re$ be given such that $\hat \phi(S)=0$ for $|S|>r.$ 
Then there is a polynomial $p\in P_r$ with
\[  \Exp_{\sigma\in S_n} [\phi(\sigma x)] = p(|x|), 
\qquad x\in\zoon.  \]
\label{prop:symmetrization}
\end{proposition}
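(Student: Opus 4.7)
The plan is to pass from the Fourier representation of $\phi$ to its multilinear representation in $x_1,\dots,x_n$, and then average monomials over the symmetric group.

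First I would use the identity $\chi_S(x)=(-1)^{\sum_{i\in S}x_i}=\prod_{i\in S}(1-2x_i)$, valid for $x\in\zoon$. Expanding each product shows that $\chi_S$ is a multilinear polynomial of degree exactly $|S|$ in the variables $x_1,\dots,x_n$. Substituting this into the Fourier expansion $\phi=\sum_{S}\hat\phi(S)\chi_S$ and using the hypothesis that $\hat\phi(S)=0$ for $|S|>r$, I conclude that there exist real coefficients $\{c_T\}_{|T|\le r}$ with
\[
\phi(x)=\sum_{\,T\subseteq\{1,\dots,n\},\;|T|\le r} c_T\prod_{i\in T} x_i,\qquad x\in\zoon.
\]

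Next I would apply linearity of expectation to the operator $\phi\mapsto\phi\symm$. The key calculation is the value of $\Exp_{\sigma\in S_n}\prod_{i\in T} x_{\sigma(i)}$ for $x\in\zoon$. Since the monomial equals $1$ exactly when $\sigma$ sends every index in $T$ to an index where $x$ has a $1$, and $0$ otherwise, a straightforward count of permutations gives
\[
\Exp_{\sigma\in S_n}\prod_{i\in T} x_{\sigma(i)} \;=\; \frac{\binom{|x|}{|T|}}{\binom{n}{|T|}}.
\]
In particular this quantity depends on $T$ only through its cardinality $t=|T|$, and $\binom{|x|}{t}$ is a univariate polynomial in $|x|$ of degree $t$.

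Combining the two steps yields
\[
\Exp_{\sigma\in S_n}[\phi(\sigma x)] \;=\; \sum_{t=0}^{r}\Biggl(\,\sum_{|T|=t} c_T\Biggr)\frac{\binom{|x|}{t}}{\binom{n}{t}},
\]
which is manifestly a polynomial in $|x|$ of degree at most $r$, proving the claim with $p$ equal to this expression. There is no real obstacle here; the only mildly non-obvious ingredient is the translation from the $\pm 1$-character basis to the $\{0,1\}$-monomial basis, which is what allows the degree bound to survive symmetrization.
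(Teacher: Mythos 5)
Your proof is correct. The paper itself does not prove this proposition---it is stated as a classical result and attributed to Minsky and Papert---but your argument (rewrite each character $\chi_S$ as a multilinear polynomial of degree $|S|$ via $(-1)^{x_i}=1-2x_i$, then average each monomial over $S_n$ to get $\binom{|x|}{t}/\binom{n}{t}$, a degree-$t$ polynomial in $|x|$) is exactly the standard symmetrization argument, and every step checks out.
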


\subsection{Pattern Matrices} \label{sec:pattern}
%An important ingredient of this work is a certain family of real
%matrices called \emph{pattern matrices}.
\emph{Pattern matrices}, introduced 
in~\cite{sherstov07ac-majmaj, sherstov07quantum}, play an
important role in this paper.  Relevant definitions and results
from~\cite{sherstov07quantum} follow.

Let $n$ and $N$ be positive integers with $n\mid N.$ Split $[N]$
into $n$ contiguous blocks, with $N/n$ elements each:
\[
[N]= \left\{1,  2,  \dots,   \frac{N}{n}\right\}
       \cup
        \left\{\frac{N}{n}+1,   \dots,  \frac{2N}{n}\right\}
         \cup \cdots \cup
         \left\{\frac{(n-1)N}{n}+1,  \dots,  N\right\}.
\]
Let $\VV(N,n)$ denote the family of subsets $V\subseteq\{1,\dots,N\}$
that have exactly one element from each of these blocks (in particular,
$|V|=n$).  Clearly, $|\VV(N,n)|=(N/n)^n.$ 

\begin{definition}[Pattern matrix]
{\rm 
For $\phi\colon\zoo^n\to\Re,$ the \emph{$(N,n,\phi)$-pattern matrix} is
the real matrix $A$ given by
\[ A = \Big[\phi(x|_V\oplus
w)\Big]_{x\in\zoo^N,\,(V,w)\in\VV(N,n)\times\zoo^n}  \;. \]
In words, $A$ is the matrix of size $2^N$~by~$(N/n)^n2^n$ whose
rows are indexed by strings $x\in\zoo^N,$ whose columns are indexed
by pairs $(V,w)\in\VV(N,n)\times\zoo^n,$ and whose entries are given
by $A_{x,(V,w)}= \phi(x|_V\oplus w).$
}
\end{definition}

\noindent
The logic behind the term ``pattern matrix" is as follows: a mosaic
arises from repetitions of a pattern in the same way that $A$ arises
from applications of $\phi$ to various subsets of the variables.
We are going to need the following expression for the spectral norm of a pattern
matrix~\cite[Thm.~4.3]{sherstov07quantum}.

\begin{theorem}[Sherstov]
Let $\phi\colon\zoo^n\to\Re$ be given. Let $A$ be the $(N,n,\phi)$-pattern
matrix. Then 
\[ 
\|A\| \;=\; \sqrt{ 2^{N+n} \left( \frac{N}{n}\right)^n} 
\;\max_{S\subseteq\{1,\dots,n\}} 
    \left\{ |\hat\phi(S)| \left( \frac{n}{N}\right)^{|S|/2} \right\}.
\]
\label{thm:pattern-spectrum}
\end{theorem}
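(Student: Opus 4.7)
The plan is to diagonalize $AA\tr$ explicitly in the Fourier basis of $\Re^{2^N}$. For each $T\subseteq\{1,\dots,N\}$, let $v_T\in\Re^{2^N}$ be the unit vector with $x$th coordinate $2^{-N/2}\chi_T(x)$; these form an orthonormal basis of $\Re^{2^N}$. I will show that every $v_T$ is an eigenvector of $AA\tr$, so that $\|A\|^2$ equals the largest of these eigenvalues.

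First I would Fourier-expand each entry of $A$: using $\chi_S(a\oplus b)=\chi_S(a)\chi_S(b)$,
\[
\phi(x|_V\oplus w)\;=\;\sum_{S\subseteq\{1,\dots,n\}}\hat\phi(S)\,\chi_{V_S}(x)\,\chi_S(w),
\]
where $V_S\subseteq V$ denotes the image of $S$ under the order-preserving bijection $\{1,\dots,n\}\to V$. Plugging this expression into $A\tr v_T$ and applying character orthogonality on $\zoo^N$, the $x$-sum collapses: the $(V,w)$th coordinate of $A\tr v_T$ vanishes unless $T\subseteq V$ for some $V\in\VV(N,n)$, which forces $T$ to contain at most one element from each of the $n$ blocks. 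For such a ``compatible'' $T$ there is, for every $V\supseteq T$, a unique $S_T\subseteq\{1,\dots,n\}$ with $V_{S_T}=T$, and $S_T$ in fact depends only on $T$ (it records which blocks $T$ hits). A short calculation then yields
\[
(A\tr v_T)_{(V,w)}\;=\;2^{N/2}\,\hat\phi(S_T)\,\chi_{S_T}(w)\qquad\text{for } V\supseteq T,
\]
with zero entries otherwise.

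Next I would apply $A$ once more. Substituting the above into $AA\tr v_T$ and Fourier-expanding $\phi$ again, the inner sum over $w$ isolates the single Fourier coefficient $\hat\phi(S_T)$ by character orthogonality on $\zoon$, the sum over $V$ contributes the factor $|\{V\in\VV(N,n):V\supseteq T\}|=(N/n)^{n-|T|}$, and the identity $\chi_{S_T}(x|_V)=\chi_T(x)$ (valid for every $V\supseteq T$) removes the residual $V$-dependence inside the $x$-coordinate. This gives
\[
AA\tr v_T \;=\; 2^{N+n}\,(N/n)^{n-|T|}\,\hat\phi(S_T)^2\, v_T.
\]
For non-compatible $T$ the first step already produced $A\tr v_T=0$, and hence also $AA\tr v_T=0$. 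Thus $\{v_T\}$ diagonalizes $AA\tr$; taking square roots and maximizing over compatible $T$ (equivalently, reparametrizing by $S=S_T\subseteq\{1,\dots,n\}$ and using $|T|=|S|$) gives the claimed formula $\|A\|=\sqrt{2^{N+n}(N/n)^n}\,\max_S|\hat\phi(S)|(n/N)^{|S|/2}$.

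The main obstacle is purely notational: one must distinguish cleanly between subsets of $\{1,\dots,n\}$ and subsets of the larger ground set $\{1,\dots,N\}$, and keep track of how a compatible $T\subseteq\{1,\dots,N\}$ is re-indexed to $S_T\subseteq\{1,\dots,n\}$ relative to a block-selection $V\supseteq T$. Once this bookkeeping is in place, the identity $\chi_S(a\oplus b)=\chi_S(a)\chi_S(b)$ together with the orthogonality of characters on $\zoo^N$ and on $\zoon$ carries out the rest of the proof.
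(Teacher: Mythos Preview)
The present paper does not actually prove this theorem; it is quoted from~\cite[Thm.~4.3]{sherstov07quantum} and used as a black box. So there is no ``paper's own proof'' here to compare against.

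That said, your argument is correct and is essentially the one given in the cited source. In~\cite{sherstov07quantum} the proof is phrased dually: one writes $A=\sum_{S}\hat\phi(S)A_S$ with $A_S$ the $(N,n,\chi_S)$-pattern matrix, verifies $A_S A_{S'}^{\mathsf T}=0$ and $A_S^{\mathsf T}A_{S'}=0$ for $S\ne S'$, and then reads off $\|A\|=\max_S|\hat\phi(S)|\,\|A_S\|$. Your computation that the Fourier vectors $v_T$ are eigenvectors of $AA^{\mathsf T}$ is exactly the row-space version of this orthogonality, and your eigenvalue $2^{N+n}(N/n)^{n-|T|}\hat\phi(S_T)^2$ coincides with $|\hat\phi(S_T)|^2\|A_{S_T}\|^2$. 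The key observation you isolate---that for a block-respecting $T$ the reindexed set $S_T$ depends only on which blocks $T$ meets, not on the particular $V\supseteq T$---is precisely what makes the orthogonal decomposition work in either formulation. There is no gap.
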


By identifying a set $S\subseteq\{1,2,\dots,N\}$ with its characteristic
vector $\mathbf{1}_S\in\zoo^N,$ we may alternately regard $\VV(N,n)$ as a
family of strings in $\zoo^N$ rather than as a family of sets. This view
will be useful in the proof of
Theorem~\ref{thm:main-technical-pattern-theorem} below.
Detailed background on the pattern matrix method is
available in the survey article~\cite{dual-survey}.

\subsection{Communication Complexity} \label{sec:quantum-model}

This section reviews the quantum model of communication complexity.
We include this review mainly for completeness; our proofs rely solely on
a standard matrix-analytic property of quantum protocols and on no other
aspect of quantum communication.

There are several equivalent ways to describe a quantum communication
protocol, e.g.,
\cite{buhrman2000quantum-survey,dewolf-thesis,razborov03quantum}.
Our description closely follows
Razborov~\cite{razborov03quantum}.  Let $\AA$ and $\BB$ be complex
finite-dimensional Hilbert spaces.  Let $\CC$ be a Hilbert space
of dimension $2,$ whose orthonormal basis we denote by
$|0\rangle,\;|1\rangle.$  Consider the tensor product
$\AA\otimes\CC\otimes\BB,$ which is itself a Hilbert space with an
inner product inherited from $\AA,$ $\BB,$ and $\CC.$ The \emph{state}
of a quantum system is a unit vector in $\AA\otimes\CC\otimes\BB,$
and conversely any such unit vector corresponds to a distinct quantum
state.  The quantum system starts in a given state and traverses a
sequence of states, each obtained from the previous one via a unitary
transformation chosen according to the protocol.  Formally, a
\emph{quantum communication protocol} is a finite sequence of unitary
transformations
\[U_1\otimes I_\BB,\quad
 I_\AA\otimes U_2,\quad
 U_3\otimes I_\BB,\quad
 I_\AA\otimes U_4,\quad
 \dots,\quad
U_{2k-1}\otimes I_\BB,\quad
I_\AA\otimes U_{2k},\]
where: $I_\AA$ and $I_\BB$ are the identity transformations in $\AA$
and $\BB,$ respectively; $U_1,U_3,\dots,U_{2k-1}$ are unitary
transformations in $\AA\otimes\CC$; and $U_2,U_4,\dots,U_{2k}$ are
unitary transformations in $\CC\otimes\BB.$ The \emph{cost} of the
protocol is the length of this sequence, namely, $2k.$ On Alice's
input $x\in X$ and Bob's input $y\in Y$ (where $X,Y$ are given finite sets),
the computation proceeds as follows.

\begin{enumerate}
\item The quantum system starts out in an initial state $\init(x,y).$
\item Through successive applications of the above unitary
transformations, the system reaches the state
\[
 \final(x,y)   =    (I_\AA\otimes U_{2k})
 (U_{2k-1}\otimes I_\BB)
 \cdots
 (I_\AA\otimes U_2)
 (U_1\otimes I_\BB)\;\init(x,y). \]
\item Let $v$ denote the projection of $\final(x,y)$ onto
$\AA\otimes\Span(|1\rangle) \otimes\BB.$ The output of the protocol
is $-1$ with probability $\langle v,v\rangle,$ and $+1$ with the
complementary probability
$1-\langle v,v\rangle.$
\end{enumerate}

All that remains is to specify how the initial state
$\init(x,y)\in\AA\otimes\CC\otimes\BB$ is constructed from
$x,y.$ It is here that the model with prior entanglement
differs from the model without prior entanglement.
In the model without prior entanglement, $\AA$  and $\BB$ have
orthonormal bases 
$\{|x,w\rangle: x\in X,\; w\in W \}$ and 
$\{|y,w\rangle: y\in Y,\; w\in W \},$ 
respectively, where $W$ is a finite set corresponding to the private
workspace of each of the parties. The initial state is the pure
state
\[ 
\init(x,y) = |x,0\rangle\,|0\rangle\,|y,0\rangle, 
\]
where $0\in W$ is a certain fixed element.  In the model with prior
entanglement, the spaces $\AA$ and $\BB$ have orthonormal bases
$\{|x,w,e\rangle:x\in X,\;w\in W,\; e\in E\}$ and 
$\{|y,w,e\rangle:y\in Y,\;w\in W,\; e\in E\},$
respectively, where $W$ is as before and $E$ is a finite set
corresponding to the prior entanglement.  The initial state is now 
the entangled state
\[ \init(x,y) =
\frac{1}{\sqrt{|E|}} \sum_{e\in E} 
|x,0,e\rangle\,|0\rangle\,|y,0,e\rangle.
\]
Apart from finite size, no assumptions are made about $W$ or $E.$
In particular, the model with prior entanglement allows for an
unlimited supply of entangled qubits.  This mirrors the unlimited
supply of shared random bits in the classical public-coin randomized
model.

Let $f\colon X\times Y\to\moo$ be a given function. A quantum
protocol $P$ is said to compute $f$ with error $\epsilon$ if
\[\Prob[P(x,y)\ne f(x,y)]\leq \epsilon\] for all $x,y,$
where the random variable $P(x,y)\in\moo$ is the output of the
protocol on input $(x,y).$ Let $Q_\epsilon(f)$ denote the least cost of
a quantum protocol without prior entanglement that computes
$f$ with error $\epsilon.$ Define $Q^*_\epsilon(f)$ analogously for
protocols with prior entanglement.  The precise choice of a constant
$\epsilon\in(0,1/2)$ affects $Q_\epsilon(f)$ and $Q^*_\epsilon(f)$
by at most a constant factor, and thus the setting $\epsilon=1/3$
entails no loss of generality. 
By the communication complexity of a Boolean
matrix $F=[F_{ij}]_{i\in I,\,j\in J}$ will be meant the communication
complexity of the associated function $f\colon I\times J\to\moo$ given
by $f(i,j) = F_{ij}.$

A useful technique for proving lower bounds on quantum communication
complexity, regardless of prior entanglement, is the \emph{generalized
discrepancy method,} originally applied by Klauck~\cite{klauck01quantum-journal}
and reformulated more broadly by Razborov~\cite{razborov03quantum}.
The following is an adaptation by
the author~\cite[Sec.~2.4]{sherstov07quantum}.

\begin{theorem}[Generalized discrepancy method] 
Fix finite sets $X,Y$ and a given function \mbox{$f\colon X\times Y\to\moo$}.
Let $\Psi=[\Psi_{xy}]_{x\in X,\,y\in Y}$ be any real matrix with $\|\Psi\|_1=1.$
Then for each $\epsilon>0,$
\[ 4^{Q_\epsilon(f)} \geq 4^{Q^*_\epsilon(f)} \geq
\frac{\langle \Psi, F\rangle - 2\epsilon}{3\,\|\Psi\|\sqrt{|X|\,|Y|}}, \]
where $F=[f(x,y)]_{x\in X,\,y\in Y}.$
\label{thm:discrepancy-method}
\end{theorem}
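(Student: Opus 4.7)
The plan is to prove the stated inequality by (i) extracting from any low-cost quantum protocol a low-rank matrix that approximates $F$ entrywise and (ii) pairing it against $\Psi$ via the spectral/trace-norm duality.

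First I would fix an optimal $\epsilon$-error protocol of cost $c=Q^*_\epsilon(f)$ with prior entanglement. Writing $\Pi_{xy}$ for its probability of outputting $-1,$ set $T=J-2\Pi,$ where $J$ is the all-ones matrix, so that $T_{xy}\in[-1,1]$ is the expected output; correctness of the protocol forces $\|T-F\|_\infty\leq 2\epsilon.$ I would then invoke the standard structural description of quantum protocols (due to Yao and Kremer without entanglement, and extended by Razborov to handle prior entanglement): for any cost-$c$ protocol with prior entanglement, there exist matrices $A\in\mathbb{C}^{|X|\times 4^c}$ and $B\in\mathbb{C}^{|Y|\times 4^c}$ whose rows are unit vectors such that $\Pi=AB^{*}.$ The derivation unfolds the protocol round by round, tracking how each unitary in $\AA\otimes\CC$ or $\CC\otimes\BB$ preserves a bilinear form in Alice's and Bob's private registers whose rank is governed by the $2^c$-dimensional communication register rather than the (possibly huge) entanglement register. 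In particular $\rk T\leq 4^c+1$ and $\|T\|_F\leq\sqrt{|X||Y|}.$

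The two ingredients now combine cleanly. Matrix Hölder, i.e., the duality between spectral and trace norms, together with the Cauchy--Schwarz estimate $\|T\|_{*}\leq\sqrt{\rk T}\,\|T\|_F,$ gives
\[
|\langle\Psi,T\rangle|\;\leq\;\|\Psi\|\cdot\|T\|_{*}\;\leq\;\sqrt{2}\cdot 2^c\,\|\Psi\|\sqrt{|X||Y|},
\]
while $|\langle\Psi,F-T\rangle|\leq\|\Psi\|_1\|F-T\|_\infty\leq 2\epsilon.$ Writing $\langle\Psi,F\rangle=\langle\Psi,T\rangle+\langle\Psi,F-T\rangle$ and rearranging yields
\[
2^c\;\geq\;\frac{\langle\Psi,F\rangle-2\epsilon}{\sqrt 2\,\|\Psi\|\sqrt{|X||Y|}}.
\]
If the target quantity $(\langle\Psi,F\rangle-2\epsilon)/(3\|\Psi\|\sqrt{|X||Y|})$ is at most $1,$ the claimed bound holds trivially from $4^c\geq 1$; otherwise squaring the previous display delivers it outright, the constants lining up because $\langle\Psi,F\rangle\leq\|\Psi\|_1\cdot\|F\|_\infty=1.$ Finally, $Q_\epsilon(f)\geq Q^*_\epsilon(f)$ is immediate, since any protocol without prior entanglement is a special case of one with.

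The main obstacle will be the structural lemma $\Pi=AB^{*}.$ With prior entanglement, $\AA$ and $\BB$ may be arbitrarily large, so a naïve rank bound fails. The argument proceeds by induction over rounds, maintaining the invariant that the successive protocol states' amplitudes decompose as inner products of vectors of dimension at most $2^k$ after $k$ rounds in the communication register, with the shared entanglement contributing only through a single initial pair of unit vectors that are ultimately absorbed into the row-$\ell_2$-normalization of $A$ and $B.$ Once this decomposition is in hand, the remainder is routine matrix analysis.
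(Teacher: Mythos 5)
Your overall skeleton---replace $F$ by the expected-output matrix $T$ with $\|T-F\|_\infty\leq 2\epsilon$, bound $|\langle\Psi,T\rangle|\leq\|\Psi\|\,\|T\|_{*}$ by trace-norm/spectral-norm duality, and bound $|\langle\Psi,F-T\rangle|\leq\|\Psi\|_1\|F-T\|_\infty$---is exactly the standard argument behind this theorem. But the structural lemma you rest it on is false in the model with prior entanglement, and this is not a cosmetic issue. You claim $\Pi=AB^{*}$ with $A,B$ having $4^{c}$ columns and unit rows, and you then use only the consequence $\rk T\leq 4^{c}+1$ together with $\|T\|_{*}\leq\sqrt{\rk T}\,\|T\|_F$. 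A rank bound of the form $\rk\Pi\leq 4^{c}$ holds for the Yao--Kremer decomposition \emph{without} entanglement, but it cannot survive prior entanglement: the equality function on $n$ bits has an $O(1)$-qubit entanglement-assisted protocol with error $1/3$, and its acceptance-probability matrix (close to a scaled identity plus a small perturbation) has rank $2^{n}$, not $4^{O(1)}$. Your closing paragraph acknowledges the danger but proposes the wrong repair: the entanglement register cannot be ``absorbed into the row-$\ell_2$-normalization'' while keeping $4^{c}$ columns, because any factorization with $4^{c}$ columns forces the impossible rank bound.

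The correct form of Razborov's lemma keeps the row norms, not the column count, under control: $\Pi=AB^{*}$ where $A$ and $B$ may have arbitrarily many columns (depending on the entanglement and workspace dimensions) but every row of $A$ and of $B$ has Euclidean norm at most $2^{c}$. One then bounds the trace norm directly via $\|AB^{*}\|_{*}\leq\|A\|_F\,\|B\|_F\leq 4^{c}\sqrt{|X|\,|Y|}$, with no appeal to rank or to $\sqrt{\rk T}\,\|T\|_F$; combined with $\|J\|_{*}=\sqrt{|X||Y|}$ this gives $\|T\|_{*}\leq 3\cdot 4^{c}\sqrt{|X||Y|}$ and the stated inequality falls out in one line, without your case analysis and squaring step (which, incidentally, would be sound arithmetic if your lemma were true---the problem is entirely in the lemma). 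To complete your proof you would need to actually carry out the round-by-round induction establishing the $2^{c}$ bound on the row norms in the presence of an arbitrary entangled initial state; that induction is the real content of the theorem and is precisely what your sketch leaves out.
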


Apart from quantum communication, we will consider two classical
models.  For a function $f\colon X\times Y\to\moo,$ we let $D(f)$ stand
for the \emph{deterministic} communication complexity of $f.$ We
let $R_{1/3}(f)$ stand for the public-coin randomized communication complexity of
$f,$ with error probability at most $1/3.$ The following result of
Mehlhorn and Schmidt~\cite{mehlhorn-schmidt82rank-cc} gives a
powerful technique for proving lower bounds on deterministic
communication.

\begin{theorem}[Mehlhorn and Schmidt]
Let $f\colon X\times Y\to\moo$ be a given function, where $X,Y$ are finite sets. 
Put $F=[f(x,y)]_{x\in X,\, y\in Y}.$ Then
\begin{align*}
 D(f) \geq \log_2 \rk F. 
\end{align*}
\label{thm:mehlhorn-schmidt}
\end{theorem}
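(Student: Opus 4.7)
The plan is to use the classical rectangle decomposition induced by a deterministic protocol and then convert it into a rank upper bound. Fix an optimal deterministic protocol $P$ computing $f$ with cost $c = D(f)$. Viewing $P$ as a binary communication tree, each internal node is labelled by a function of either Alice's or Bob's input, and each transcript of $P$ corresponds to a leaf. Since the protocol is deterministic, the set of inputs $(x,y)$ producing a given transcript is a combinatorial rectangle $A_\ell \times B_\ell \subseteq X \times Y$, and on this rectangle the protocol outputs a single value $v_\ell \in \moo$. These rectangles partition $X \times Y$, and there are at most $2^c$ of them.

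Next I would convert the partition into a low-rank decomposition of $F$. Since $P$ computes $f$ correctly and outputs $v_\ell$ on $A_\ell \times B_\ell$, we may write
\[
F \;=\; \sum_{\ell} v_\ell \, \mathbf{1}_{A_\ell} \mathbf{1}_{B_\ell}^{\top},
\]
where $\mathbf{1}_{A_\ell} \in \zoo^{X}$ and $\mathbf{1}_{B_\ell} \in \zoo^{Y}$ are the characteristic vectors of $A_\ell$ and $B_\ell$. Each summand is an outer product and therefore has rank at most $1$, so subadditivity of rank gives $\rk F \leq 2^c$. Taking $\log_2$ yields $D(f) = c \geq \log_2 \rk F$, which is exactly the claim.

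There is no real obstacle here; the only thing to verify carefully is the rectangle property of protocol leaves, which follows by induction on the depth of the protocol tree: at each node, one party's message depends only on that party's input, so the set of inputs reaching any fixed child of the node is a product of the sets reaching the parent restricted by a condition on one coordinate, hence remains a rectangle.
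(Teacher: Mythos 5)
Your proof is correct and is the standard argument for the Mehlhorn--Schmidt rank lower bound: a cost-$c$ deterministic protocol induces a partition of $X\times Y$ into at most $2^c$ monochromatic rectangles, each contributing a rank-one matrix, so $\rk F\leq 2^{D(f)}$ by subadditivity of rank. The paper itself states this theorem as a cited result without proof, so there is nothing to compare against; your derivation, including the inductive verification that each leaf's preimage is a rectangle, is complete and matches the canonical treatment (e.g., in Kushilevitz and Nisan).
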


An excellent reference on classical communication
complexity is 
the monograph by Kushilevitz and Nisan~\cite{ccbook}.

\section{Combinatorial Ingredients} \label{sec:combinatorial}

In this section, we develop the combinatorial component of our
solution. 
We start by recalling an elegant result, due to Kenyon and
Kutin~\cite[Cor.~3.1]{ell-block-sensitivity04}, that the sensitivity
and $\ell$-block sensitivity of a Boolean function are polynomially
related for all constant $\ell.$ For the purposes of this paper,
the case $\ell=2$ is all that is needed.

\begin{theorem}[Kenyon and Kutin]
Let $f:\zoon\to\moo$ be given. Then
\begin{align*}
\s(f) \geq \alpha \sqrt{\bs_2(f)}
\end{align*}
for some absolute constant $\alpha>0.$
\label{thm:s-bs}
\end{theorem}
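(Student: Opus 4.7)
The plan is to prove the contrapositive: $\bs_2(f) \leq O(\s(f)^2)$. Let $k = \bs_2(f)$, witnessed by an input $z \in \zoon$ and pairwise disjoint blocks $S_1, \dots, S_k \subseteq \oneton$ of sizes in $\{1,2\}$ with $f(z \oplus e_{S_i}) \neq f(z)$ for each $i$. Normalize so that $f(z) = 1$ and write $s = \s(f)$.

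First I would peel off the easy contributions to the sensitivity at $z$. Every singleton block $S_i = \{c_i\}$ is itself a sensitive coordinate at $z$, and every size-$2$ block $T_j = \{a_j, b_j\}$ for which $f(z \oplus e_{a_j}) = -1$ or $f(z \oplus e_{b_j}) = -1$ contributes at least one sensitive coordinate at $z$. Since all blocks are pairwise disjoint, the resulting sensitive coordinates are distinct, so the combined number of singletons and such ``half-sensitive'' pairs is at most $\s_z(f) \leq s$. Discarding them leaves $m \geq k - s$ disjoint pairs $T_j = \{a_j, b_j\}$ for which
\[ f(z) = f(z \oplus e_{a_j}) = f(z \oplus e_{b_j}) = 1, \qquad f(z \oplus e_{T_j}) = -1. \]
For each such pair, $b_j$ is a sensitive coordinate at $z \oplus e_{a_j}$, and symmetrically $a_j$ is sensitive at $z \oplus e_{b_j}$.

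The remaining goal is to prove $m \leq O(s^2)$. The essential difficulty is that these sensitivity facts live at $2m$ \emph{distinct} neighboring inputs, each yielding only $s \geq 1$ in isolation, so they must be concentrated onto a single input carrying $\Omega(\sqrt{m})$ simultaneously sensitive coordinates. The plan, following Kenyon and Kutin, is to flip the $a_j$'s of a carefully chosen subset $R \subseteq \{1,\dots,m\}$ all at once, producing an input $w = z \oplus \sum_{j \in R} e_{a_j}$ at which many of the coordinates $b_j$, $j \notin R$, remain sensitive. For a fixed $j \notin R$, whether $b_j$ is still sensitive at $w$ is determined by a handful of $f$-values obtained by further flipping $\{a_j,b_j\}$, and the bound $\s(f) \leq s$ at every such intermediate input controls how often additional $a_{j'}$-flips can destroy those surviving sensitivities.

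The main obstacle is precisely this last aggregation step — making rigorous the greedy or averaging argument that some choice of $R$ yields $\Omega(\sqrt{m})$ sensitive coordinates at $w$. The subtlety is that a single $a_j$-flip can simultaneously interfere with sensitivity events for many other pairs; Kenyon and Kutin's argument limits this interference via a judicious choice of $R$ that keeps the expected destruction rate to $O(s)$ per flip, yielding $\Omega(\sqrt{m})$ surviving sensitivities at $w$ and hence $\sqrt{m} \leq O(s)$. Combined with $k \leq m + s$, this gives $\s(f) \geq \Omega(\sqrt{\bs_2(f)})$, as claimed.
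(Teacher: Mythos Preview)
The paper does not prove this theorem at all: it is quoted verbatim as a known result of Kenyon and Kutin~\cite[Cor.~3.1]{ell-block-sensitivity04}, with no argument given. So there is no ``paper's own proof'' to compare against; any assessment has to be of your sketch on its own merits.

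Your reduction step is sound. Peeling off the singleton blocks and the size-$2$ blocks that already contain a coordinate sensitive at $z$ is correct, and disjointness of the blocks ensures these account for at most $\s_z(f)\leq s$ of them. You are then left with $m\geq k-s$ ``pure'' pairs $T_j=\{a_j,b_j\}$ satisfying $f(z)=f(z\oplus e_{a_j})=f(z\oplus e_{b_j})\neq f(z\oplus e_{T_j})$, and it is enough to show $m=O(s^2)$.

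The genuine gap is exactly where you say it is: you describe the \emph{shape} of the concentration argument (flip the $a_j$'s over some subset $R$ and argue that many $b_j$'s survive as sensitive coordinates at the resulting point $w$) but do not carry it out. The phrase ``a judicious choice of $R$ that keeps the expected destruction rate to $O(s)$ per flip'' is not an argument; it is a hope. The actual content of Kenyon and Kutin's proof is precisely the mechanism that controls this interference, and nothing in your write-up supplies it. Without that step you have only the trivial bound $k\geq 1$ from each pair individually, and no way to aggregate. As written this is a correct problem decomposition together with a pointer to the literature, not a proof; to complete it you must either reproduce Kenyon and Kutin's combinatorial argument bounding how many $b_j$-sensitivities a single $a_{j'}$-flip can destroy, or give an alternative (e.g., an averaging argument over random $R$) with an explicit calculation showing the expected number of surviving sensitive $b_j$'s is $\Omega(m/s)$.
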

\noindent
\begin{remark}
{\rm
The lower bound in Theorem~\ref{thm:s-bs} is asymptotically tight,
by a construction due to Rubinstein~\cite{rubinstein95sensitivity}.
}
\end{remark}

For our purposes, the key consequence of Kenyon and Kutin's result is the
following lemma.

\begin{lemma}
\label{lem:projection}
Let $f\colon\zoon\to\moo$ be a given function. Then there exists
$\g\colon\zoon\to\moo$ such that
\begin{align}
\s(\g)&\geq \alpha\sqrt{\bs(f)}
\label{eqn:g-sensitivity}
\end{align}
for some absolute constant $\alpha>0$ and
\begin{align}
\g(x) \equiv f(x_{i_1},x_{i_2},\dots,x_{i_n})
\label{eqn:g-subfunction}
\end{align}
for some $i_1,i_2,\dots,i_n\in\{1,2,\dots,n\}.$
\end{lemma}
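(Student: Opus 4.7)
The plan is to construct $\g$ as a variable-projection of $f$ and then invoke the Kenyon--Kutin bound (Theorem~\ref{thm:s-bs}) in the form $\s(\g)\geq\alpha\sqrt{\bs_2(\g)}.$ The task thus reduces to arranging $\bs_2(\g)\geq\bs(f),$ i.e., exhibiting a projection $\g(x)=f(x_{i_1},\dots,x_{i_n})$ whose $2$-block sensitivity is at least $\bs(f).$

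Let $k=\bs(f)$ and fix a witness $(z,S_1,\dots,S_k)\colon$ pairwise disjoint nonempty blocks with $f(z\oplus e_{S_1})=\cdots=f(z\oplus e_{S_k})\ne f(z).$ Because a projection cannot negate variables, one cannot directly relocate $z$ to the all-zeros input; the workaround is to split each block according to the value of $z.$ Write $S_j=S_j^0\cup S_j^1$ with $S_j^b=\{i\in S_j:z_i=b\}.$ Assign to each nonempty subblock a distinct ``target label'' in $\oneton$: $t_j^0$ whenever $S_j^0\ne\emptyset$ and $t_j^1$ whenever $S_j^1\ne\emptyset.$ The total number of labels introduced this way is at most $\sum_j\min(|S_j|,2)\leq\sum_j|S_j|,$ so there remain at least $n-\sum_j|S_j|$ free labels in $\oneton$ to bijectively index the coordinates outside $\bigcup_j S_j.$ Define $\pi\colon\oneton\to\oneton$ by $\pi(i)=t_j^b$ for $i\in S_j^b$ and by the aforementioned bijection otherwise, and set $\g(y)=f(y_{\pi(1)},\dots,y_{\pi(n)}),$ which is in the prescribed form~(\ref{eqn:g-subfunction}).

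To establish $\bs_2(\g)\geq k,$ pick $y^*\in\zoon$ with $y^*_{\pi(i)}=z_i$ on $\pi(\oneton),$ extending arbitrarily elsewhere; the assignment is well defined because $z$ is constant on every subblock $S_j^b.$ Then $\g(y^*)=f(z).$ For each $j,$ let $T_j$ be the set of labels actually introduced among $\{t_j^0,t_j^1\},$ so that $|T_j|\leq 2$ and the $T_j$'s are pairwise disjoint by construction. Flipping $T_j$ at $y^*$ flips exactly the coordinates of $S_j$ inside the argument passed to $f,$ whence $\g(y^*\oplus e_{T_j})=f(z\oplus e_{S_j}),$ a common value across $j$ that differs from $f(z).$ Thus $T_1,\dots,T_k$ jointly witness $\bs_2(\g)\geq k,$ and Theorem~\ref{thm:s-bs} yields~(\ref{eqn:g-sensitivity}).

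The main obstacle is the no-negation restriction baked into the projection form~(\ref{eqn:g-subfunction}): a naive reduction using one label per block $S_j$ would require every $S_j$ to be $z$-constant. Splitting each $S_j$ into its two $z$-constant pieces costs at most one extra label per block and transforms the $\bs(f)$ witness into a $\bs_2(\g)$ witness, which is precisely the sort of witness that Theorem~\ref{thm:s-bs} converts into sensitivity at the price of only a square root.
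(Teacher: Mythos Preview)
Your argument is correct and follows essentially the same route as the paper: split each sensitive block $S_j$ into its $z$-constant halves, identify the variables within each half, and read off a $\bs_2$ witness for the resulting projection $\g$, then invoke Kenyon--Kutin. The paper chooses the representative label of each half to be its minimum element (so labels stay inside their own blocks and the outside coordinates keep their names), while you allow arbitrary distinct labels and verify the count; both yield the same construction.
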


\begin{proof}
{
Put $k=\bs(f)$ and fix disjoint sets $S_1,\dots,S_k\subseteq\oneton$ such
that one has
$f(z\oplus e_{S_1})=
   f(z\oplus e_{S_2})=
   \cdots=
   f(z\oplus e_{S_k})\ne
   f(z)$
for some $z\in\zoon.$ Let $I$ be the set of all indices $i$ such
the string $z|_{S_i}$ features both zeroes and ones. Put $|I|=r.$
For convenience of notation, we will assume that $I=\{1,2,\dots,r\}.$
For $i=1,2,\dots,r,$ form the partition $S_i=A_i\cup B_i,$ where
\begin{align*}
 A_i = \{j\in S_i: z_j=0\}, \qquad
 B_i = \{j\in S_i: z_j=1\}. 
\end{align*}
Now let
\begin{align*}
\g(x) = f\left(\bigoplus_{i=1}^r    x_{\min A_i} e_{A_i} \oplus
              \bigoplus_{i=1}^r    x_{\min B_i} e_{B_i} \oplus 
              \bigoplus_{i=r+1}^k  x_{\min S_i} e_{S_i} \oplus 
		      \bigoplus_{i\notin S_1\cup\cdots\cup S_k}
				 x_ie_i\right).
\end{align*}
Then (\ref{eqn:g-subfunction}) is immediate.  By the properties of
$f,$ we have $\bs_2(\g)\geq k,$ with the blocks 
$\{\min A_1,\min B_1\},
  \dots,
 \{\min A_r,\min B_r\}$ 
and $\{\min S_{r+1}\},\dots,\{\min S_k\}$ being sensitive for $\g$
on input $x=z.$ As a result, Theorem~\ref{thm:s-bs} implies
(\ref{eqn:g-sensitivity}).
}
\end{proof}

\section{Analytic Ingredients} \label{eqn:analytic}

We now turn to the analytic component of our solution.  The main
results of this section can all be derived by modifying
Razborov's proof of the quantum lower bound for the disjointness
function~\cite{razborov03quantum}.  The alternate derivation
presented here has some advantages, as we
discuss in Remark~\ref{rem:qce-advantages-over-razborov}.  We
start by exhibiting a large family of Boolean functions whose
inapproximability by low-degree polynomials in the uniform norm
can be witnessed by a single, common dual object.

\begin{theorem}
\label{thm:duality}
Let $\Fcal$ denote the set of all functions $f\colon\zoon\to\moo$ such that
$f(e_1)=f(e_2)=\cdots=f(e_n)\ne f(0)=1.$ Let $\delta>0$
be a sufficiently small 
absolute constant. Then there exists a function $\psi\colon\zoon\to\Re$
such that:
\begin{align}
&\;\,\hat\psi(S)=0, && |S|<\delta\sqrt n,
   \label{eqn:mask-psi-orthog}\\
&\rule{0mm}{6mm}\sum_{x\in\zoon}|\psi(x)| =1,
   \label{eqn:mask-psi-bounded}\\
&\sum_{x\in\zoon}\psi(x)f(x) > \frac13, && f\in \Fcal.
   \label{eqn:mask-psi-correl}
\end{align}
\end{theorem}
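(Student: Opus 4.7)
The plan is to prove Theorem~\ref{thm:duality} via a minimax/LP-duality argument, reducing the existence of a common dual witness to the classical approximate-degree lower bound for OR. Fix $d=\delta\sqrt{n}$.

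\medskip

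\noindent\textbf{Minimax reformulation.} After a harmless scaling (any $\psi$ with $\|\psi\|_1<1$ can be normalized), it suffices to prove
\[
\rho\;:=\;\max_{\psi}\;\min_{f\in\Fcal}\;\sum_{x\in\zoon}\psi(x)\,f(x)\;>\;\frac{1}{3},
\]
with the maximum over $\psi\colon\zoon\to\Re$ subject to $\|\psi\|_1\leq 1$ and $\hat\psi(S)=0$ for all $|S|<d$. The admissible set is compact convex, $\Fcal$ is finite, and the payoff is bilinear, so von Neumann's minimax theorem yields
\[
\rho\;=\;\min_{\mu\in\Delta(\Fcal)}\;\max_{\psi}\;\sum_{x}\psi(x)\,\bar f_\mu(x),\qquad \bar f_\mu\;:=\;\Exp_{f\sim\mu}[f].
\]
The inner maximum, over $\psi$ of $\ell_1$-norm at most $1$ orthogonal to all polynomials of degree $<d$, equals the $\ell_\infty$ distance from $\bar f_\mu$ to the subspace of such polynomials, by the standard $\ell_1$--$\ell_\infty$ duality (an application of Theorem~\ref{thm:lp}). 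Hence
\[
\rho\;=\;\min_{\mu,\,p}\;\|\bar f_\mu-p\|_\infty,
\]
where $p$ ranges over $n$-variate real polynomials of degree less than $d$.

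\medskip

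\noindent\textbf{Symmetrization and univariate bound.} Every $\bar f_\mu$ satisfies $\bar f_\mu(0)=1$, $\bar f_\mu(e_i)=-1$, and $|\bar f_\mu|\leq 1$ pointwise. If some $p$ of degree less than $d$ had $\|\bar f_\mu - p\|_\infty\leq 1/3$, then $p(0)\geq 2/3$, $p(e_i)\leq -2/3$, and $|p(x)|\leq 4/3$ everywhere. Replacing $p$ by its symmetrization $p\symm(x):=\Exp_\sigma[p(\sigma x)]$ preserves the degree and all three bounds, and by Proposition~\ref{prop:symmetrization} there is a univariate $q$ of degree less than $d$ with $p\symm(x)=q(|x|)$. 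Thus $q(0)\geq 2/3$, $q(1)\leq -2/3$, and $|q(k)|\leq 4/3$ for $k=0,1,\dots,n$. The Ehlich--Zeller--Rivlin lemma, valid once $d=o(\sqrt n)$, promotes the integer bound on $q$ to $\max_{[0,n]}|q(x)|=O(1)$; Markov's inequality then gives
\[
\frac{4}{3}\;\leq\;|q(0)-q(1)|\;\leq\;\max_{[0,n]}|q'|\;\leq\;O\!\left(\frac{d^2}{n}\right),
\]
forcing $d=\Omega(\sqrt n)$. Choosing $\delta$ strictly below the implicit absolute constant contradicts $d=\delta\sqrt n$ and establishes $\rho>1/3$. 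A maximizer $\psi$ exists by compactness; rescaling it to $\|\psi\|_1=1$ (which only increases each correlation) delivers all three conclusions of the theorem.

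\medskip

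\noindent\textbf{Main obstacle.} The conceptual crux is the minimax step: $\Fcal$ has doubly exponential size $2^{2^n-n-1}$, so \emph{a priori} different dual witnesses might be required for different $f\in\Fcal$. The minimax collapses this dependence onto a single averaged target $\bar f_\mu$, after which the standard symmetrization/polynomial-method machinery takes over. A minor technical point is that the resulting univariate $q$ is controlled only on integer points; upgrading to a uniform bound on $[0,n]$ via Ehlich--Zeller--Rivlin is routine but requires $\delta$ to be chosen small enough that $d$ lies safely in the $o(\sqrt n)$ regime where the lemma applies.
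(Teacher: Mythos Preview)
Your proof is correct and follows essentially the same route as the paper: symmetrization plus the Nisan--Szegedy univariate degree bound, combined with LP duality to extract the witness $\psi$. The paper's version is marginally more direct in that it skips the minimax step---since every $f\in\Fcal$ has $f(0)=1$, $f(e_i)=-1$, and $|f|\le 1$ elsewhere, the infeasible primal LP (``find a low-degree polynomial with $p(0)\ge 2/3$, $p(e_i)\le -2/3$, $|p|\le 4/3$'') is literally the same for every $f$, so a single Farkas application produces a $\psi$ satisfying $\psi(0)-\sum_i\psi(e_i)-\sum_{|x|\ge 2}|\psi(x)|>1/3$, which immediately gives the uniform correlation bound without passing through an averaged $\bar f_\mu$.
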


\begin{proof}
{
Let $p$ be a univariate real polynomial that satisfies
\begin{align*}
p(0)&\in[2/3,4/3],\\
p(1)&\in[-4/3,-2/3],\\
p(i)&\in[-4/3,4/3], && i=2,3,\dots,n.
\end{align*}
It follows from basic approximation theory (viz.,
the inequalities due to A.~A.~Markov and S.~N.~Bernstein) that any such
polynomial $p$ has degree at least $\delta \sqrt n$ for an 
absolute constant $\delta>0.$ See Nisan and
Szegedy~\cite{nisan-szegedy94degree}, pp.~308--309, for a short
derivation.

By the symmetrization argument (Proposition~\ref{prop:symmetrization}),
there does not exist a multivariate polynomial $\phi(x_1,\dots,x_n)$
of degree less than $\delta\sqrt n$ such that
\begin{align*}
\phi(0)&\in[2/3,4/3],\\
\phi(e_i)&\in[-4/3,-2/3],  && i=1,2,\dots,n,\\
\phi(x)&\in[-4/3,4/3], && x\in\zoon\setminus\{0,e_1,e_2,\dots,e_n\}.
\end{align*}
Equivalently, the following system of linear
constraints has no solution in the reals $\alpha_S$:
\begin{align*}
\sum_{|S|<\delta\sqrt n} \alpha_S\chi_S(0)&\in[2/3,4/3],\\
\sum_{|S|<\delta\sqrt n} \alpha_S\chi_S(e_i)&\in[-4/3,-2/3],  && i=1,2,\dots,n,\\
\sum_{|S|<\delta\sqrt n} \alpha_S\chi_S(x)&\in[-4/3,4/3], && x\in\zoon\setminus\{0,e_1,e_2,\dots,e_n\}.
\end{align*}
The duality of linear programming
(Theorem~\ref{thm:lp}) now implies the existence of $\psi$ that
obeys
(\ref{eqn:mask-psi-orthog}), (\ref{eqn:mask-psi-bounded}), and additionally satisfies
\begin{align*}
   \psi(0) -\sum_{i=1}^n \psi(e_i) - 
   \sum_{\substack{x\in\zoon\\ |x|\geq 2}} |\psi(x)| > \frac13,
\end{align*}
which forces (\ref{eqn:mask-psi-correl}).
}
\end{proof}

\vspace{\proofskip}

We are now in a position to prove our main technical criterion for
high quantum communication complexity. Our proof is based on the
pattern matrix method~\cite{sherstov07ac-majmaj,
sherstov07quantum}. 
The novelty of the development below resides in allowing the rows of
the given Boolean matrix to derive from distinct Boolean functions,
which considerably disrupts the spectral structure.  We are able
to force the same quantitative conclusion by using the fact that
these Boolean functions, albeit distinct, share the relevant
dual object.  
%By identifying a set $S\subseteq\{1,2,\dots,N\}$
%with its characteristic vector $\mathbf{1}_S\in\zoo^N,$ it will
%be convenient in what follows to view the set family $\Vcal(N,n)$
%from Section~\ref{sec:pattern} as a family of strings in
%$\zoo^N.$ 

\begin{theorem}%[cf.~Razborov~\cite{razborov03quantum}]
  \label{thm:main-technical-pattern-theorem}
Let $\g\colon\zoon\to\moo$ be a function such that 
$\g(z\oplus e_1)=
 \g(z\oplus e_2)=
 \cdots=
 \g(z\oplus e_k)
 \ne \g(z)$ 
for some $z\in\zoon$ with $z_1=\cdots=z_k=0.$ Then the matrix
$G=[\g(x\wedge y)]_{x,y\in\zoon}$ satisfies
\begin{align*}
	Q^*_{1/3}(G) \geq \Omega(\sqrt k).
\end{align*}
\end{theorem}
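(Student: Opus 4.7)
The plan is to combine the generalized discrepancy method (Theorem~\ref{thm:discrepancy-method}), the pattern-matrix spectral bound (Theorem~\ref{thm:pattern-spectrum}), and the dual witness of Theorem~\ref{thm:duality}, via a ``positive/negative'' encoding that implants a pattern-matrix structure inside the sensitive block $[k]$ of $\g$.

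First, pass to a clean subproblem. WLOG $\g(z) = +1$ (otherwise negate $\g$); define $h \colon \zoo^k \to \moo$ by $h(u) = \g(u_1, \dots, u_k, z_{k+1}, \dots, z_n)$, so $h(0) = 1$ and $h(e_i) = \g(z \oplus e_i) = -1$ for every $i \in [k]$, placing $h$ in the family $\Fcal$ of Theorem~\ref{thm:duality} on $k$ variables. Restricting Alice's input to $x = (\alpha, 1, \dots, 1)$ and Bob's to $y = (\beta, z_{k+1}, \dots, z_n)$ forces $\g(x \wedge y) = h(\alpha \wedge \beta)$, so the $2^k \times 2^k$ matrix $H = [h(\alpha \wedge \beta)]_{\alpha,\beta \in \zoo^k}$ is a submatrix of $G$; it suffices to prove $Q^*_{1/3}(H) = \Omega(\sqrt k)$.

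Next, embed a pattern-matrix structure into $H$ by a positive/negative trick. Set $k' = \lfloor k/4 \rfloor$ and $N = 2k'$, and partition the first $2N$ coordinates of $[k]$ as $P = [N]$ and $Q = [N+1, 2N]$. Parameterize Alice's latent input by $u \in \zoo^N$ via $\alpha|_P = u$, $\alpha|_Q = \bar u$, and $\alpha = 0$ on $[2N+1, k]$; parameterize Bob's latent input by $(V, w) \in \VV(N, k') \times \zoo^{k'}$ so that $\beta_j = 1$ iff $j \in V$ and $w_{\mathrm{block}(j)} = 0$ for $j \in P$, while $\beta_{j+N} = 1$ iff $j \in V$ and $w_{\mathrm{block}(j)} = 1$ for $j \in P$, with $\beta = 0$ elsewhere. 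A direct check shows that $\alpha \wedge \beta \in \zoo^k$ is the characteristic vector of $\{V(i) + w_i N : (u|_V \oplus w)_i = 1\} \subseteq [2N] \subseteq [k]$. Defining $\hat h_{V,w}(v) = h(\1_{\{V(i) + w_i N \,:\, v_i = 1\}})$ for $v \in \zoo^{k'}$, one verifies $\hat h_{V,w}(0) = h(0) = 1$ and $\hat h_{V,w}(e_i) = h(e_{V(i) + w_i N}) = -1$ (because $V(i) + w_i N \in [k]$), so $\hat h_{V,w} \in \Fcal$ for every $(V, w)$.

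Finally, apply Theorem~\ref{thm:discrepancy-method} to the submatrix $M = [\hat h_{V,w}(u|_V \oplus w)]_{u, (V,w)}$ of $H$, using as dual the $(N, k', \psi')$-pattern matrix $\Psi = [\psi'(u|_V \oplus w)]$, where $\psi'$ is the dual witness of Theorem~\ref{thm:duality} for $\Fcal$ on $k'$ variables, strengthened so that $\sum_v \psi'(v) f(v) > 2/3 + c$ for some absolute $c > 0$ and every $f \in \Fcal$ (achievable by tightening the $[2/3, 4/3]$ tolerances in the proof of Theorem~\ref{thm:duality}). Theorem~\ref{thm:pattern-spectrum} together with the Fourier-degree property of $\psi'$ yields $\|\Psi\| \, \sqrt{|X| \, |Y|} / \|\Psi\|_1 \leq 2^{-\Omega(\sqrt{k'})}$, and summing $\sum_v \psi'(v) \hat h_{V,w}(v) > 2/3 + c$ first over $u$ and then over $(V, w)$ gives $\langle \Psi, M \rangle / \|\Psi\|_1 > 2/3 + c$. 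Theorem~\ref{thm:discrepancy-method} then forces $Q^*_{1/3}(M) = \Omega(\sqrt{k'}) = \Omega(\sqrt k)$, which lifts to $G$ since $M$ is a submatrix. The main obstacle is this positive/negative encoding: each ``virtual'' XOR output bit costs two physical coordinates of $\g$, yet only the first $k$ coordinates are guaranteed sensitive—this is what forces $2N \leq k$ and hence limits the effective block count to $k' = \Theta(k)$ rather than $k$ itself; crucially, though the rows of $M$ correspond to distinct Boolean functions $\hat h_{V,w}$, the single witness $\psi'$ works uniformly for all of them because each $\hat h_{V,w}$ lands in $\Fcal$.
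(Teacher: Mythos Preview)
Your proof is correct and follows essentially the same route as the paper: reduce to the sensitive block, use a positive/negative encoding of Alice's bits (your $\alpha|_P=u,\ \alpha|_Q=\bar u$ is exactly the paper's $x'_{2i-1}=x_i,\ x'_{2i}=\bar x_i$, just laid out in blocks instead of interleaved) so that Bob's choice of $(V,w)$ realizes $u|_V\oplus w$ inside $\alpha\wedge\beta$, observe that every resulting row function lies in $\Fcal$, and apply the pattern-matrix spectral bound together with the single dual witness $\psi$ and the generalized discrepancy method. The only noteworthy difference is how you close the $\langle\Psi,M\rangle>2\epsilon$ gap: the paper simply takes $\epsilon=1/10$ (so $1/3-2\epsilon>0$) and invokes standard error reduction to pass to $Q^*_{1/3}$, whereas you instead strengthen Theorem~\ref{thm:duality} to push the correlation above $2/3$---both are valid, but the paper's way avoids reopening the duality argument.
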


\begin{remark}
\label{rem:qce-advantages-over-razborov}
{\rm 
As formulated above,
Theorem~\ref{thm:main-technical-pattern-theorem} can be 
derived by modifying Razborov's proof of the $\Omega(\sqrt n)$
quantum lower bound for the disjointness
function~\cite[\S5.3]{razborov03quantum}.  The derivation that we
are about to give offers some advantages.  First, it is
simpler and in particular does not require tools such as Hahn
matrices in~\cite{razborov03quantum}. Second, it generalizes to
any family $\Fcal$ of functions with a common dual polynomial,
whereas the method in~\cite{razborov03quantum} is restricted to
symmetrizable families.  
%Finally, the proof below generalizes
%to three and more communicating parties, as discussed in
%Section~\ref{sec:multiparty}.
}
\end{remark}

\indent
{\bf Proof (of Theorem~\textup{\ref{thm:main-technical-pattern-theorem}})}
{
Without loss of generality, we may assume that $k$ is divisible by $4.$
Let $\Fcal$ denote the system of all functions $f\colon\zoo^{k/4}\to\moo$
such that $f(e_1)=f(e_2)=\cdots=f(e_{k/4})\ne f(0)=1.$ By
Theorem~\ref{thm:duality}, there exists $\psi\colon\zoo^{k/4}\to\Re$
such that
\begin{align}
&\;\,\hat\psi(S)=0, && |S|<\delta\sqrt k,
   \label{eqn:mask-psi-orthog-restated}\\
&\rule{0mm}{6mm}\sum_{x\in\zoo^{k/4}}|\psi(x)| =1,
   \label{eqn:mask-psi-bounded-restated}\\
&\sum_{x\in\zoo^{k/4}}\psi(x)f(x) > \frac13, && f\in \Fcal,
   \label{eqn:mask-psi-correl-restated}
\end{align}
where $\delta>0$ is an absolute constant.  Now, let $\Psi$ be the
$(k/2,k/4,2^{-3k/4}\psi)$-pattern matrix.  It follows from
(\ref{eqn:mask-psi-bounded-restated}) that
\begin{align}
\|\Psi\|_1 = 1.
   \label{eqn:Psi-bounded}
\end{align}
By (\ref{eqn:trivial-Fourier-bound}) and (\ref{eqn:mask-psi-bounded-restated}),
\begin{align}
\max_{S} |\hat\psi(S)| \leq 2^{-k/4}.
	\label{eqn:mask-max-fourier-coeff-psi}
\end{align}
In view of (\ref{eqn:mask-psi-orthog-restated}) and
(\ref{eqn:mask-max-fourier-coeff-psi}), Theorem~\ref{thm:pattern-spectrum}
yields
\begin{align}
\|\Psi\|
\leq  2^{-\delta\sqrt k/2} \, 2^{-k/2}.
\label{eqn:Psi-norm}
\end{align}
Now, put 
\begin{align*}
   M=\g(z)
     \left[\g\left(z\oplus \bigoplus_{i=1}^{k/2}\{x_iy_{2i-1}e_{2i-1}
     \oplus\overline{x_i}y_{2i}e_{2i}\}\right)
	  \right]_{x\in\zoo^{k/2},\, y\in\Vcal(k,k/4)},
\end{align*}
where we identify each $y\in\Vcal(k,k/4)$ in the
natural way with a string in 
$\zoo^k.$ Observe that
\begin{align*}
   M= \Big[f_{V,w}(x|_V\oplus
   w)\Big]_{x\in\zoo^{k/2},(V,w)\in\Vcal(k/2,k/4)\times\zoo^{k/4}}
\end{align*}
for some functions $f_{V,w}\in \Fcal.$ This representation makes it
clear, in view of (\ref{eqn:mask-psi-correl-restated}), that
\begin{align}
\langle \Psi, M\rangle > \frac13.    \label{eqn:Psi-M-correl}
\end{align}
By (\ref{eqn:Psi-bounded}), (\ref{eqn:Psi-norm}),
(\ref{eqn:Psi-M-correl}) and the generalized discrepancy method
(Theorem~\ref{thm:discrepancy-method}), we have
$Q^*_{1/10}(M)\geq\Omega(\sqrt k).$ It remains to note that $M$ is
a submatrix of $\g(z)G,$ so that $Q^*_{1/10}(G)\geq Q^*_{1/10}(M).$
}
$\Box$

We will also need the following equivalent formulation of
Theorem~\ref{thm:main-technical-pattern-theorem}, for disjunctions
instead of conjunctions.

\begin{corollary}
  \label{cor:main-technical-pattern-theorem}
Let $\g\colon\zoon\to\moo$ be a function such that 
$\g(z\oplus e_1)=
 \g(z\oplus e_2)=
 \cdots=
 \g(z\oplus e_k)
 \ne \g(z)$ 
for some $z\in\zoon$ with $z_1=\cdots=z_k=1.$ Then the matrix
$G=[\g(x\vee y)]_{x,y\in\zoon}$ satisfies
\begin{align*}
	Q^*_{1/3}(G) \geq \Omega(\sqrt k).
\end{align*}
\end{corollary}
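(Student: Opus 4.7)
The plan is to deduce Corollary~\ref{cor:main-technical-pattern-theorem} from Theorem~\ref{thm:main-technical-pattern-theorem} via a straightforward De Morgan reduction: complementing the inputs swaps $\wedge$ with $\vee$ and swaps the constants $0,1$, so the conjunction-based lower bound transfers verbatim to the disjunction-based one. The only thing to verify is that the ``sensitivity on a $0$-pattern for $\wedge$'' hypothesis of the theorem corresponds exactly, under complementation, to the ``sensitivity on a $1$-pattern for $\vee$'' hypothesis of the corollary.

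Concretely, given $\g$ and $z$ as in the corollary (with $z_1=\cdots=z_k=1$), I would define $\tilde\g\colon\zoon\to\moo$ by $\tilde\g(x)=\g(\overline x)$, where $\overline x=x\oplus\1_{\{1,\dots,n\}}$ denotes bitwise complement, and set $\tilde z=\overline z$. Since complementation commutes with XOR in the sense that $\overline{\tilde z\oplus e_i}=z\oplus e_i$, we have $\tilde\g(\tilde z\oplus e_i)=\g(z\oplus e_i)$ for each $i$ and $\tilde\g(\tilde z)=\g(z)$. Consequently,
\[
\tilde\g(\tilde z\oplus e_1)=\tilde\g(\tilde z\oplus e_2)=\cdots=\tilde\g(\tilde z\oplus e_k)\ne\tilde\g(\tilde z),
\]
and $\tilde z_1=\cdots=\tilde z_k=0$. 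Thus $\tilde\g$ and $\tilde z$ satisfy the hypothesis of Theorem~\ref{thm:main-technical-pattern-theorem}, which yields $Q^*_{1/3}(\tilde G)\geq\Omega(\sqrt k)$ for the matrix $\tilde G=[\tilde\g(x\wedge y)]_{x,y\in\zoon}$.

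It remains to relate $\tilde G$ to the target matrix $G=[\g(x\vee y)]_{x,y\in\zoon}$. Using De Morgan's law, $\tilde\g(x\wedge y)=\g(\overline{x\wedge y})=\g(\overline x\vee\overline y)$, so
\[
\tilde G_{x,y}=G_{\overline x,\overline y}.
\]
In other words, $\tilde G$ is obtained from $G$ by permuting the rows via $x\mapsto\overline x$ and the columns via $y\mapsto\overline y$. Such local bijections on the input sets do not affect communication complexity in any model (Alice and Bob simply flip their input bits before invoking the protocol for the other matrix). Therefore $Q^*_{1/3}(G)=Q^*_{1/3}(\tilde G)\geq\Omega(\sqrt k)$, as required.

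There is no real obstacle here; this is a routine symmetry argument. The only point meriting attention is bookkeeping: one must check that complementation sends the combinatorial pattern ``$z_i=1$ with $z\oplus e_i$ flipping the output'' to the corresponding pattern ``$z_i=0$ with $z\oplus e_i$ flipping the output,'' which it does because $e_i$ is fixed under the complementation of $z$.
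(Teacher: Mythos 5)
Your proof is correct and is essentially the paper's own argument: the paper likewise sets $\tilde\g=\g_{(1,\dots,1)}$ (i.e., $\tilde\g(x)=\g(\overline x)$) and $\tilde z=\overline z$, applies Theorem~\ref{thm:main-technical-pattern-theorem}, and observes that $G$ and $\tilde G$ coincide up to permutations of rows and columns. No gaps; your De Morgan bookkeeping matches the paper's reduction exactly.
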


\begin{proof}
{
Put $\tilde \g = \g_{(1,\dots,1)}$ and 
$\tilde z = (1,\dots,1)\oplus z.$ Then 
$\tilde z_1=\cdots=\tilde z_k=0$ and 
$\tilde \g(\tilde z\oplus e_1)=
 \tilde \g(\tilde z\oplus e_2)=
 \cdots=
 \tilde \g(\tilde z\oplus e_k)
 \ne \tilde \g(\tilde z).$ By
Theorem~\ref{thm:main-technical-pattern-theorem}, the matrix $\tilde
G=[\tilde \g(x\wedge y)]_{x,y\in\zoon}$ satisfies $Q^*_{1/3}(\tilde G)\geq
\Omega(\sqrt k).$ It remains to note that $G$ and $\tilde G$ are
identical, up to permutations of rows and columns.
}
\end{proof}

\vspace{\proofskip}

We point out another simple corollary to 
Theorem~\ref{thm:main-technical-pattern-theorem}.

\begin{corollary}
\label{cor:new-quantum-lb}
Let $f\colon\zoon\to\moo$ be given. Then for some $z\in\zoon,$
the matrix $F=[f_z(x\wedge y)]_{x,y} = [f(\dots,(x_i\wedge y_i)\oplus
z_i,\dots)]_{x,y}$ obeys
\begin{align*}
Q^*_{1/3}(F) = \Omega(\sqrt{\bs(f)}).
\end{align*}
\end{corollary}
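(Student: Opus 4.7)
The plan is to reduce the claim to a single application of Theorem~\ref{thm:main-technical-pattern-theorem}. That theorem needs the base function to have singleton sensitivity $k$ at an input whose first $k$ coordinates vanish; so the goal is to exhibit, as a submatrix of $F$, the AND-composition matrix of a function $h\colon\zoo^k\to\moo$ with $k=\bs(f)$ for which the all-zero input is singleton-sensitive in every coordinate. Notably, Lemma~\ref{lem:projection} and the Kenyon--Kutin bound are \emph{not} used; the appropriate block structure can be injected into $(x,y)$ through the free parameter $z$.

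Fix a witness $w\in\zoon$ together with pairwise disjoint nonempty blocks $S_1,\ldots,S_k\subseteq\oneton$ certifying $\bs(f)=k$, so that $f(w\oplus e_{S_j})\ne f(w)$ for every $j$. I would then choose $z\in\zoon$ by $z_i=w_i$ for $i\in\bigcup_j S_j$ and $z_i=\overline{w_i}$ for $i\notin\bigcup_j S_j$, and embed $\zoo^k$ into $\zoon$ by the injection $\phi(u)_i=u_j$ for $i\in S_j$ and $\phi(u)_i=1$ for $i\notin\bigcup_j S_j$. A coordinate-wise check then gives
\begin{align*}
f_z\big(\phi(x)\wedge\phi(y)\big)=f\!\left(w\oplus\bigoplus_{j=1}^{k}(x_j\wedge y_j)\,e_{S_j}\right)=:h(x\wedge y)
\end{align*}
for all $x,y\in\zoo^k$. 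Since $\phi$ is injective, this identifies $[h(x\wedge y)]_{x,y\in\zoo^k}$ as a submatrix of $F$, yielding $Q^*_{1/3}(F)\ge Q^*_{1/3}([h(x\wedge y)])$.

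From the block-sensitivity hypothesis, $h(0)=f(w)\ne f(w\oplus e_{S_j})=h(e_j)$ for each $j=1,\ldots,k$, so the all-zero input of $h$ is singleton-sensitive in every one of its $k$ coordinates. Theorem~\ref{thm:main-technical-pattern-theorem} applied to $h$ at the point $0\in\zoo^k$ then yields $Q^*_{1/3}([h(x\wedge y)])\ge\Omega(\sqrt k)=\Omega(\sqrt{\bs(f)})$, which finishes the proof. I do not foresee any substantive difficulty; the one piece requiring care is the choice of $z$, which is forced by the need to cancel the ``unused'' coordinates outside $\bigcup_j S_j$ so that after ANDing and XORing with $z$ the input reverts to $w$ on those positions, leaving all non-trivial behavior in the hands of the $k$ block-flips controlled by $x\wedge y\in\zoo^k$.
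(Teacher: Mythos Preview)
Your argument is correct and is essentially the same as the paper's: shift $f$ by a suitable $z$, identify the variables within each sensitive block to obtain a function on $\zoo^k$ that is singleton-sensitive at the origin in every coordinate, and invoke Theorem~\ref{thm:main-technical-pattern-theorem}. The paper states this more tersely (it picks $z$ so that $\zbs(f_z)=\bs(f)$ and says ``by an argument analogous to Lemma~\ref{lem:projection}''), but the content is identical and, as you note, neither proof needs the Kenyon--Kutin bound.
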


\begin{proof}
{
Put $k=\bs(f)$ and fix $z\in\zoon$ such that $\zbs(f_z)=k.$ 
By an argument analogous to Lemma~\ref{lem:projection}, one
obtains a function $\g\colon\zoon\to\moo$ such that
$
\g(e_1)=\g(e_2)=\cdots=\g(e_{k})\ne \g(0)
$
 and
$
\g(x) \equiv f_z(\xi_1,\xi_2,\dots,\xi_n)
$
for some symbols $\xi_1,\xi_2,\dots,\xi_n\in\{x_1,x_2,\dots,x_n,0,1\}.$
Then Theorem~\ref{thm:main-technical-pattern-theorem}
implies that the matrix $G=[\g(x\wedge y)]_{x,y\in\zoon}$ satisfies
$Q^*_{1/3}(G)\geq\Omega(\sqrt k).$ On the other hand, 
$Q^*_{1/3}(F)\geq
Q^*_{1/3}(G)$ by construction.
}
\end{proof}

\section{Quantum-Classical Equivalence} \label{sec:qce}

We now combine the combinatorial and analytic development of the
previous sections to obtain our main results. 
We start
by proving relevant lower bounds against quantum protocols.

\begin{theorem}
   \label{thm:quantum-and-or}
Let $f\colon\zoon\to\moo$ be given. Put $F_1=[f(x\wedge y)]_{x,y}$
and $F_2=[f(x\vee y)]_{x,y},$ where the row and column indices range over
$\zoon.$ Then
\begin{align*}
   \max\{ Q^*_{1/3}(F_1), Q^*_{1/3}(F_2)\} = \Omega(\bs(f)^{1/4}).
\end{align*}
\end{theorem}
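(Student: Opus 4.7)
The plan is to combine Lemma~\ref{lem:projection} with either Theorem~\ref{thm:main-technical-pattern-theorem} or Corollary~\ref{cor:main-technical-pattern-theorem}, choosing between the AND and OR variants according to the bit pattern of the sensitive input.

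First I would invoke Lemma~\ref{lem:projection} to obtain a function $\g\colon\zoon\to\moo$ arising from $f$ by identifying variables, i.e.\ $\g(x)\equiv f(x_{i_1},\dots,x_{i_n})$, and satisfying $\s(\g)\geq\alpha\sqrt{\bs(f)}$. Let $s=\s(\g)$, fix an input $z\in\zoon$ witnessing this sensitivity, and fix singleton blocks $\{j_1\},\dots,\{j_s\}$ that are sensitive for $\g$ at $z$. By pigeonhole, at least $k:=\lceil s/2\rceil$ of the indices $j_t$ have $z_{j_t}$ equal to the same bit $b\in\{0,1\}$. After a harmless permutation of the coordinates of $\g$ (which merely permutes rows and columns of its communication matrix), I may assume that these indices are $1,2,\dots,k$ and that $z_1=\cdots=z_k=b$. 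Note that $\sqrt{k}=\Omega(\sqrt{\s(\g)})=\Omega(\bs(f)^{1/4})$.

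If $b=0$, Theorem~\ref{thm:main-technical-pattern-theorem} applied to $\g$ yields $Q^*_{1/3}([\g(x\wedge y)]_{x,y\in\zoon})\geq\Omega(\sqrt{k})$. The identity $\g(x\wedge y)=f(x_{i_1}\wedge y_{i_1},\dots,x_{i_n}\wedge y_{i_n})$ realizes $[\g(x\wedge y)]$ as a row/column blow-up of a submatrix of $F_1$ (pairs $(x,y)$ mapping to the same projected string produce identical rows, respectively columns), and neither passing to a submatrix nor duplicating rows or columns increases $Q^*_{1/3}$. Hence $Q^*_{1/3}(F_1)\geq\Omega(\bs(f)^{1/4})$. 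If instead $b=1$, Corollary~\ref{cor:main-technical-pattern-theorem} applied to $\g$ gives the analogous bound for $[\g(x\vee y)]$, and the same blow-up argument yields $Q^*_{1/3}(F_2)\geq\Omega(\bs(f)^{1/4})$. Taking the maximum over the two cases finishes the proof.

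I do not foresee a real obstacle, since the heavy lifting has already been done: the combinatorial input, via Kenyon and Kutin, is packaged into Lemma~\ref{lem:projection}, while the analytic input, via the pattern matrix method with a common dual polynomial, is packaged into Theorem~\ref{thm:main-technical-pattern-theorem}. The only points that merit a brief check are the pigeonhole step that dictates whether to reduce to $F_1$ or $F_2$, and the verification that the submatrix/duplication embedding between $[\g(x\wedge y)]$ (resp.\ $[\g(x\vee y)]$) and $F_1$ (resp.\ $F_2$) preserves quantum bounded-error complexity.
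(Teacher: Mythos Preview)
Your proposal is correct and follows essentially the same approach as the paper's own proof: invoke Lemma~\ref{lem:projection}, pigeonhole the sensitive coordinates according to the value of $z$ there, and then apply Theorem~\ref{thm:main-technical-pattern-theorem} or Corollary~\ref{cor:main-technical-pattern-theorem} as appropriate. Your explicit treatment of the reduction from $[\g(x\wedge y)]$ to $F_1$ via the local map $x\mapsto(x_{i_1},\dots,x_{i_n})$ is a bit more careful than the paper, which simply asserts $Q^*_{1/3}(F_1)\geq Q^*_{1/3}(G_1)$ ``in view of'' the identity~(\ref{eqn:g-subfunction}); otherwise the two arguments coincide.
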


\begin{proof}
{
By Lemma~\ref{lem:projection}, there exists a function $\g\colon\zoon\to\moo$
such that 
\begin{align}
\s(\g)\geq \Omega(\sqrt{\bs(f)})
\label{eqn:g-sensitivity-restated}
\end{align}
and
\begin{align}
\g(x) \equiv f(x_{i_1},x_{i_2},\dots,x_{i_n})
\label{eqn:g-subfunction-restated}
\end{align}
for some $i_1,i_2,\dots,i_n\in\{1,2,\dots,n\}.$ By renumbering the
variables if necessary, we see that at least one of the following
statements must hold:
\begin{enumerate}
\item[(1)] $\g(z\oplus e_1)=\g(z\oplus e_2)=\cdots=\g(z\oplus e_{\lceil
\s(\g)/2\rceil})\ne \g(z)$ for some $z\in\zoon$ with
$z_1=z_2=\cdots=z_{\lceil \s(\g)/2\rceil}=0$;
\item[(2)] $\g(z\oplus e_1)=\g(z\oplus e_2)=\cdots=\g(z\oplus e_{\lceil
\s(\g)/2\rceil})\ne \g(z)$ for some $z\in\zoon$ with
$z_1=z_2=\cdots=z_{\lceil \s(\g)/2\rceil}=1.$
\end{enumerate}
In the former case, Theorem~\ref{thm:main-technical-pattern-theorem}
implies that the matrix $G_1=[\g(x\wedge y)]_{x,y\in\zoon}$ satisfies
$Q^*_{1/3}(G_1)\geq\Omega(\sqrt{\s(\g)}),$ whence $Q^*_{1/3}(F_1)\geq
Q^*_{1/3}(G_1)\geq\Omega(\bs(f)^{1/4})$ in view
of~(\ref{eqn:g-sensitivity-restated}) and
(\ref{eqn:g-subfunction-restated}).

In the latter case, Corollary~\ref{cor:main-technical-pattern-theorem}
implies that \mbox{$G_2=[\g(x\vee y)]_{x,y\in\zoon}$} satisfies
$Q^*_{1/3}(G_2)\geq\Omega(\sqrt{\s(\g)}),$ whence $Q^*_{1/3}(F_2)\geq
Q^*_{1/3}(G_2)\geq\Omega(\bs(f)^{1/4})$ in view
of~(\ref{eqn:g-sensitivity-restated}) and
(\ref{eqn:g-subfunction-restated}).
}
\end{proof}

\vspace{\proofskip}

Having obtained the desired lower bounds on quantum communication, we
now turn to classical protocols. The bound that we seek here follows
easily from the work of Buhrman et al.~\cite{bcw98quantum} and Beals
et al.~\cite{beals-et-al01quantum-by-polynomials}. Related observations
have been used in a number of recent papers in the area~\cite{razborov03quantum,
sherstov07quantum, shi-zhu07block-composed}.

\begin{theorem}[Classical upper bound;
cf.~\cite{bcw98quantum,beals-et-al01quantum-by-polynomials}]
   \label{thm:classical-and-or}
Let $f\colon\zoon\to\moo$ be given. Put $F_1=[f(x\wedge y)]_{x,y}$
and $F_2=[f(x\vee y)]_{x,y},$ where the row and column indices range over
$\zoon.$ Then
\begin{align*}
   \max\{ D(F_1), D(F_2)\} \leq 2\dt(f)\leq 2\bs(f)^3.
\end{align*}
\end{theorem}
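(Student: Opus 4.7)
The plan is to combine two standard ingredients. The second inequality $\dt(f)\leq\bs(f)^3$ is already delivered by Theorem~\ref{thm:beals}, so the real content is showing $D(F_1),D(F_2)\leq 2\dt(f)$ by simulating an optimal decision tree for $f$ via a short deterministic communication protocol.

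Fix an optimal decision tree $T$ for $f$ of depth $d=\dt(f)$. I will describe the protocol for $F_1$; the argument for $F_2$ is identical with $\wedge$ replaced by $\vee$. On input $x$ to Alice and $y$ to Bob, both parties simulate $T$ in lockstep starting from the root. At an internal node that queries coordinate $i$ of the tree's input, the value we need is $(x\wedge y)_i=x_i\wedge y_i$. Alice sends $x_i$ (one bit); Bob receives it, computes $x_i\wedge y_i$ locally, and sends that bit back. Both parties now know the answer to the query, so they both descend to the correct child of the current node. After at most $d$ queries they reach a leaf of $T$, whose label is $f(x\wedge y)$, and each query has cost exactly two bits. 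Hence $D(F_1)\leq 2d=2\dt(f)$. For $F_2$ the same protocol works verbatim, with Bob computing $x_i\vee y_i$ instead of $x_i\wedge y_i$, so $D(F_2)\leq 2\dt(f)$ as well.

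Chaining with Theorem~\ref{thm:beals} yields $\max\{D(F_1),D(F_2)\}\leq 2\dt(f)\leq 2\bs(f)^3$, as required. There is no real obstacle here; the only point that warrants a line of care is ensuring the two parties stay synchronized in their simulation of $T$, which is automatic because after each exchange both of them know the queried bit and hence agree on the next node to visit.
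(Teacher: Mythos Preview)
Your proof is correct and essentially identical to the paper's own argument: both invoke Theorem~\ref{thm:beals} for the second inequality and simulate an optimal decision tree for $f$ by spending two bits per queried node to determine the value of $(x\wedge y)_i$ (respectively $(x\vee y)_i$). The only cosmetic difference is that the paper has the parties exchange $x_i$ and $y_i$ directly, whereas you have Bob reply with the computed bit; the cost and logic are the same.
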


\indent {\bf Proof (adapted from~\cite{bcw98quantum,
beals-et-al01quantum-by-polynomials}).}
{
The second inequality follows immediately by 
Theorem~\ref{thm:beals}, so we will focus on the first. 
Fix an optimal-depth decision tree for $f.$ The protocol for $F_1$
is as follows.  On input $x$ and $y,$ Alice and Bob start at the
top node of the tree, read its label $i,$ and exchange the
two bits $x_i$ and $y_i.$ This allows them to compute $x_i\wedge
y_i$ and to determine which branch to take next. The process repeats
at the new node and so on, until the parties have reached a leaf
node. Since the longest root-to-leaf path has length $\dt(f),$ the
claim follows.  The proof for $F_2$ is entirely analogous.
}
$\Box$

Theorems~\ref{thm:quantum-and-or} and~\ref{thm:classical-and-or}
immediately imply our main result on quantum-classical equivalence,
stated above as Theorem~\ref{thm:main}.

\section{Masked Problems and the Log-Rank Conjecture} \label{sec:logrank}

As we showed in the previous section, the communication problem of
computing \mbox{$f(x\wedge y)$} and \mbox{$f(x\vee y)$} has
polynomially related quantum and classical complexities. Here, we will
see that this communication problem additionally satisfies the
\emph{log-rank conjecture} of 
Lov{\'a}sz and Saks~\cite{lovasz-saks88logrank}.

The log-rank conjecture states that the deterministic
communication complexity of every Boolean matrix $F$ satisfies
$D(F)\leq (\log_2 \rk F)^c + c$ for some absolute constant $c>0.$ By
Theorem~\ref{thm:mehlhorn-schmidt}, this is equivalent to saying
that $D(F)$ is polynomially related to $\log_2 \rk F.$ The development in
this section is based on the following result of Buhrman and de
Wolf~\cite{buhrman-dewolf01polynomials}, who studied the special case of symmetric
functions $f$ in the same context.

\begin{theorem}[Buhrman and de Wolf]
Let $f\colon\zoon\to\Re$ be a given function. 
Put $M=[f(x\wedge y)]_{x,y},$ where the
row and column indices range over $\zoon.$ Then
\[ \rk M = \mon(f).\]
\label{thm:rk-mon}
\end{theorem}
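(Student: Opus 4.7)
The plan is to expand $f$ in its unique multilinear representation and then observe that $M$ factors into an outer-product form whose rank is easy to read off.

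First I would write $f$ in its monomial expansion on $\zoon,$
\[
f(z) \;=\; \sum_{S\subseteq\{1,\dots,n\}} \alpha_S \prod_{i\in S} z_i,
\]
so that by definition $\mon(f)$ equals $|\{S:\alpha_S\neq 0\}|.$ The key algebraic observation is that $(x\wedge y)_i = x_i\, y_i,$ so substituting $z=x\wedge y$ yields
\[
f(x\wedge y) \;=\; \sum_{S} \alpha_S \Bigl(\prod_{i\in S} x_i\Bigr)\Bigl(\prod_{i\in S} y_i\Bigr).
\]
For each $S$ let $u_S\in\Re^{2^n}$ be the column vector with entries $u_S(x)=\prod_{i\in S}x_i.$ The identity above shows that
\[
M \;=\; \sum_{S:\,\alpha_S\neq 0} \alpha_S\, u_S u_S^{\top},
\]
which already gives the upper bound $\rk M \leq \mon(f)$ as a sum of at most $\mon(f)$ rank-one matrices.

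For the matching lower bound I would argue that the vectors $\{u_S : \alpha_S\neq 0\}$ are linearly independent. In fact the full collection $\{u_S\}_{S\subseteq\{1,\dots,n\}}$ consists of $2^n$ vectors in $\Re^{2^n}$ spanning the whole space of real functions on $\zoon,$ since every such function has a unique multilinear representation. Hence any subfamily of the $u_S$ is linearly independent. Writing $M = U D U^{\top},$ where $U$ is the $2^n\times\mon(f)$ matrix whose columns are the $u_S$ with $\alpha_S\neq 0$ and $D$ is the diagonal matrix with those nonzero $\alpha_S$ on the diagonal, we see that $U$ has full column rank and $D$ is invertible, so $\rk M = \mon(f).$

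There is no real obstacle here; the proof is essentially a one-line factorization once one notices that AND over $\zoo$ is multiplication and that the monomials $\prod_{i\in S}x_i$ form a basis for real-valued functions on $\zoon.$ The only place to be careful is to keep track of \emph{which} monomials appear with nonzero coefficient, so that the rank is exactly $\mon(f)$ rather than $2^n$ or $\deg(f).$
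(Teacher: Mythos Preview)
Your proof is correct. The paper itself does not prove this statement; it quotes it as a result of Buhrman and de Wolf~\cite{buhrman-dewolf01polynomials} and uses it as a black box. Your argument---writing $f(x\wedge y)=\sum_S \alpha_S\bigl(\prod_{i\in S}x_i\bigr)\bigl(\prod_{i\in S}y_i\bigr)$ and reading off the rank from the factorization $M=UDU^{\top}$ with $U$ having full column rank and $D$ invertible diagonal---is exactly the standard proof and matches what Buhrman and de Wolf do.
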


Our first observation is as follows.

\begin{lemma}
Let $f\colon\zoon\to\Re$ be a given function, where $f\not\equiv 0$ and
$d=\deg(f).$ Then for some $z\in\zoon,$
\begin{align*}
\mon(f_z) \geq \left(\frac32\right)^d.
\end{align*}
\label{lem:hard-shift}
\end{lemma}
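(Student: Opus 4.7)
The plan is to prove this by induction on the degree $d$, where the inductive step cleverly compares a pair of shifts of $f$ that differ in a single coordinate. The base case $d = 0$ is trivial since $f \not\equiv 0$ forces $\mon(f) \geq 1 = (3/2)^0$.

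For the inductive step with $d \geq 1$, I would choose a variable $x_i$ that appears in some monomial of $f$ of degree exactly $d$ and decompose $f(x) = x_i\, g(x) + h(x)$, where neither $g$ nor $h$ depends on $x_i$ and $\deg g = d-1$. Applying the inductive hypothesis to $g$ yields some $z' \in \zoon$ with $z'_i = 0$ (which is no loss since shifting in the $i$th coordinate has no effect on $g$) satisfying $\mon(g_{z'}) \geq (3/2)^{d-1}$.

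Next, I would examine the two shifts $z'$ and $z' \oplus e_i$ together. Because $z'_i = 0$, one has
\[
f_{z'}(x) = x_i\, g_{z'}(x) + h_{z'}(x), \qquad f_{z' \oplus e_i}(x) = (1-x_i)\, g_{z'}(x) + h_{z'}(x),
\]
and since the monomials involving $x_i$ are disjoint from those that do not, a direct count gives
\[
\mon(f_{z'}) = \mon(g_{z'}) + \mon(h_{z'}), \qquad \mon(f_{z' \oplus e_i}) = \mon(g_{z'}) + \mon(g_{z'} + h_{z'}).
\]
Summing the two identities and invoking the elementary triangle-type inequality $\mon(g_{z'}) \leq \mon(g_{z'} + h_{z'}) + \mon(h_{z'})$ (which holds because any $T$ with a nonzero coefficient in $g_{z'} = (g_{z'} + h_{z'}) - h_{z'}$ must be supported in at least one of the two summands) gives $\mon(f_{z'}) + \mon(f_{z' \oplus e_i}) \geq 3\mon(g_{z'}) \geq 3 \cdot (3/2)^{d-1} = 2(3/2)^d$, so at least one of the two shifts attains the desired bound $(3/2)^d$.

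The main conceptual step — and the only place where care is required — is realizing that one must consider a \emph{pair} of shifts $z'$ and $z' \oplus e_i$ simultaneously and exploit the triangle inequality; a single shift would at best preserve the inductive bound, not amplify it by the crucial factor $3/2$. No detailed tracking of cancellations inside $h_{z'}$ or $g_{z'} + h_{z'}$ is needed, which is what makes the argument go through cleanly.
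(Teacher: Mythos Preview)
Your proof is correct and follows essentially the same approach as the paper: induction on $d$, decomposition $f = x_i\,g + h$ with $\deg g = d-1$, applying the inductive hypothesis to $g$, and comparing the two shifts that differ only in coordinate $i$. The only cosmetic difference is in the final estimate: the paper bounds $\mon(f_{z'\oplus e_i}) \geq \mon(g_{z'}) + |\mon(h_{z'}) - \mon(g_{z'})|$ and then takes a max via a case split, whereas you sum the two monomial counts and invoke the triangle inequality $\mon(g_{z'}) \leq \mon(g_{z'}+h_{z'}) + \mon(h_{z'})$ directly---but these two arguments are equivalent rephrasings of one another.
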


\begin{proof}
{
The proof is by induction on $d.$ The base case $d=0$ holds since
\mbox{$f\not\equiv 0.$} Assume that the claim holds for all $f$ of degree
$d-1.$ By renumbering the variables if necessary, we have
$f(x) = x_1 p(x_2,\dots,x_n) + q(x_2,\dots,x_n)$
for some polynomial $p$ of degree $d-1.$ The inductive assumption
guarantees the existence of $u\in\zoo^{n-1}$ such that
$\mon(p_u)\geq (3/2)^{d-1}.$ Note that 
$\mon(f_{(0,u)}) = \mon(p_u) + \mon(q_u)$ and
$\mon(f_{(1,u)}) \geq \mon(p_u) + \lvert\mon(q_u)-\mon(p_u)\rvert.$
Therefore,
\begin{align*}
\max\{ \mon(f_{(0,u)}),\mon(f_{(1,u)})\}
\geq \frac32 \mon(p_u) \geq \left(\frac 32\right)^d,
\end{align*}
as desired.
}
\end{proof}

\vspace{\proofskip}

We will also need the following technical lemma.

\begin{lemma}
Let $f\colon\zoon\to\Re$ be given. Fix an index $i=1,2,\dots,n.$
Define 
\[ \tilde f(x_1,\dots,x_{i-1},x_{i+1},\dots,x_n)
=f(x_1,\dots,x_{i-1}, 0,x_{i+1},\dots,x_n).\]
Then
\[ \max\{ \mon(\tilde f),\; \mon(f_{e_i})\} \geq \frac12 \mon(f). \]
\label{lem:restrictions}
\end{lemma}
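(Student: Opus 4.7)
The plan is to prove this via the multilinear (Fourier-style) representation of $f$ as a polynomial in $x_1,\dots,x_n$, decomposed according to the presence or absence of the variable $x_i$. Specifically, I would write uniquely
\[
f(x) = g(x_{-i}) + x_i\, h(x_{-i}),
\]
where $x_{-i}$ denotes the remaining $n-1$ variables and $g,h$ are multilinear polynomials not involving $x_i$. Since the monomials of $g$ and those of $x_i h$ live in disjoint ``strata'' (the latter all contain $x_i$, the former none), we get $\mon(f) = \mon(g) + \mon(h)$.

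Next I would identify the two objects on the left of the inequality in terms of $g$ and $h$. Evaluating at $x_i=0$ kills the $x_i h$ part, so $\tilde f = g$ and therefore $\mon(\tilde f) = \mon(g)$. For the shifted function, substituting $x_i\mapsto 1\oplus x_i = 1-x_i$ gives
\[
f_{e_i}(x) = g(x_{-i}) + (1-x_i)\,h(x_{-i}) = \bigl(g(x_{-i}) + h(x_{-i})\bigr) + x_i\cdot\bigl(-h(x_{-i})\bigr),
\]
which is already in multilinear form with its $x_i$-free and $x_i$-containing parts separated. Hence $\mon(f_{e_i}) = \mon(g+h) + \mon(h)$.

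The key observation is then simply that $\mon(g+h) \geq 0$, which trivially yields
\[
\mon(\tilde f) + \mon(f_{e_i}) \;=\; \mon(g) + \mon(g+h) + \mon(h) \;\geq\; \mon(g) + \mon(h) \;=\; \mon(f).
\]
The claimed bound $\max\{\mon(\tilde f),\mon(f_{e_i})\} \geq \tfrac12 \mon(f)$ follows immediately. There is no real obstacle here: the only subtlety is recognizing that the $x_i$-containing part of $f_{e_i}$ still has exactly $\mon(h)$ monomials (no cancellation is possible within it because it is a scalar multiple of $h$), and that any cancellation in $g+h$ only helps by making that term smaller, not by driving the sum of the two quantities below $\mon(f)$.
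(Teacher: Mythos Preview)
Your proof is correct and follows essentially the same approach as the paper: decompose $f = \tilde f + x_i\,p$ (your $g$ and $h$), note that $\mon(f)=\mon(\tilde f)+\mon(p)$, and observe that $\mon(f_{e_i})\geq \mon(p)$, yielding $\mon(\tilde f)+\mon(f_{e_i})\geq \mon(f)$. The only difference is cosmetic: you compute $\mon(f_{e_i})=\mon(g+h)+\mon(h)$ exactly, whereas the paper simply asserts the inequality $\mon(f_{e_i})\geq \mon(p)$ ``by inspection.''
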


\begin{proof}
{
Write
\[f(x) = x_i p(x_1,\dots,x_{i-1},x_{i+1},\dots,x_n) +
   \tilde f(x_1,\dots,x_{i-1},x_{i+1},\dots,x_n).\]
It is clear by inspection
that $\mon(f_{e_i}) \geq \mon(p).$ Thus, we have 
$\mon(\tilde f) + \mon(f_{e_i})
    \geq \mon(\tilde f)+ \mon(p) 
    =\mon(f),
$ as desired.
}
\end{proof}

\vspace{\proofskip}

At last, we arrive at the main result of this section.

\begin{theorem}
Let $f\colon\zoon\to\moo$ be given, $d=\deg(f).$
Put $F_1=[f(x\wedge y)]_{x,y}$ and $F_2=[f(x\vee y)]_{x,y},$ where
the row and column indices range over $\zoon.$ Then
\begin{align}
   \max\{\rk F_1,\, \rk F_2\} \geq \left(\frac3{2\sqrt 2}\right)^d 
  \geq 1.06^d. 
  \label{eqn:rk-lower-bound}
\end{align}
In particular, the communication problem of computing, on input
$x,y\in\zoon,$ both of the quantities $f(x\wedge y)$ and $f(x\vee y),$
satisfies the log-rank conjecture.
  \label{thm:rk-lower-bound}
\end{theorem}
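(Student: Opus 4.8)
The plan is to reduce the claim about $\rk F_1$ and $\rk F_2$ to a statement about the monomial count of shifts of $f$, using the machinery already assembled. First I would handle $F_1$. By Theorem~\ref{thm:rk-mon} we have $\rk[f_z(x\wedge y)]_{x,y}=\mon(f_z)$ for every $z\in\zoon$, and since $[f_z(x\wedge y)]_{x,y}$ is obtained from $F_1$ by negating subsets of the input coordinates (i.e. it is a submatrix-free transformation: replacing $x_i$ by $\overline{x_i}$ or $y_i$ by $\overline{y_i}$ according to $z_i$ permutes rows and columns and does not change rank), we get $\rk F_1\geq\max_z\mon(f_z)$. Actually more care is needed here: $f_z(x\wedge y)=f((x\wedge y)\oplus z)$, and negating an input bit of $f$ is not literally a row/column permutation of $[f(x\wedge y)]$. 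The right move is to observe that for each $z$, the matrix $M_z=[f((x\wedge y)\oplus z)]_{x,y}$ embeds as a submatrix of $F_1$ or relates to it in a rank-nonincreasing way — concretely, restricting rows to $x$ with $x_i=1$ for $i\notin\mathrm{supp}(z)$ and columns likewise lets one realize the ``$\vee$ on the $z$-coordinates'' pattern, linking $\rk F_1$ and $\rk F_2$ to $\mon(f_z)$ for the appropriate $z$. I would make this precise by splitting $z$ into the coordinates where we want AND behavior and those where we want OR behavior, using $F_1$ for one part and $F_2$ for the other.

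Granting that reduction, the combinatorial heart is: for every $f\not\equiv 0$ with $d=\deg f$, there is a shift $z$ with $\mon(f_z)\geq(3/2)^d$ — this is exactly Lemma~\ref{lem:hard-shift}. That alone would give $\max\{\rk F_1,\rk F_2\}\geq(3/2)^d$, which is stronger than the claimed $(3/(2\sqrt 2))^d$. So I suspect the $\sqrt 2$ loss in the statement comes from the cost of converting a shift $f_z$ (which mixes AND- and OR-type behavior across coordinates) into genuine submatrices of $F_1$ and $F_2$ separately: each coordinate on which $z$ demands the ``wrong'' operation costs a factor of $2$ in monomial count when passing through Lemma~\ref{lem:restrictions} (setting a variable to $0$ or extracting the coefficient of a variable, each of which can halve $\mon$). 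If the split of the $d$ ``informative'' coordinates between $F_1$ and $F_2$ is balanced, the worse side loses at most a factor $2^{d/2}=(\sqrt 2)^d$, turning the $(3/2)^d$ bound into $(3/2)^d/(\sqrt 2)^d=(3/(2\sqrt 2))^d$. I would carry this out by induction, at each coordinate either keeping it (AND goes to $F_1$, OR goes to $F_2$) or, when the shift forces the opposite operation, applying Lemma~\ref{lem:restrictions} to pay a factor $2$ and recurse on the restricted function.

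The last sentence of the theorem — that the composed problem satisfies the log-rank conjecture — then follows by chaining the inequalities already in the paper: $D(F)\leq R_{1/3}(F)\le\cdots$ is irrelevant here; rather, $D(F)\leq 2\,\dt(f)\leq O(\deg(f)^3)$ by Theorem~\ref{thm:classical-and-or} composed with Theorem~\ref{thm:dt-deg}, while $(3/(2\sqrt 2))^d\leq\max\{\rk F_1,\rk F_2\}\leq\rk F$ gives $d=O(\log\rk F)$, hence $D(F)\leq O(\deg(f)^3)=O((\log\rk F)^3)$, which is polynomial in $\log\rk F$ as the conjecture demands.

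The main obstacle I expect is the bookkeeping in the reduction step: cleanly arguing that a shift $f_z$, which is neither purely an AND-composition nor purely an OR-composition, can be ``decomposed'' so that its monomial count is controlled by the ranks of $F_1$ and $F_2$ individually, and getting the factor-of-$2$ accounting to land exactly on $(3/(2\sqrt 2))^d$ rather than something weaker. Lemmas~\ref{lem:hard-shift} and~\ref{lem:restrictions} are clearly the designated tools, so the work is in assembling them with the correct induction on which coordinates are ``AND-like'' versus ``OR-like.''
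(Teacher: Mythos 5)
Your proposal is correct in outline and follows essentially the same route as the paper: Lemma~\ref{lem:hard-shift} to get a shift with $(3/2)^d$ monomials, Lemma~\ref{lem:restrictions} to convert the mixed shift into a pure all-AND or all-OR submatrix at a cost of $2$ per converted coordinate, Theorem~\ref{thm:rk-mon} to pass from monomial count to rank, and Theorems~\ref{thm:classical-and-or} and~\ref{thm:dt-deg} for the log-rank consequence. Your diagnosis of where the $\sqrt 2$ loss comes from is exactly right.

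The one step you leave vague is how to guarantee that only $d$ coordinates enter the factor-of-$2$ accounting. Applied directly to $f$ on $n$ variables, Lemma~\ref{lem:hard-shift} produces a shift $z\in\zoon$, and the minority side of the $0$/$1$ split of $z$ could have as many as $n/2$ coordinates, giving a loss of $2^{n/2}$ that swamps $(3/2)^d$. The paper avoids this by first restricting $f$ to the $d$ variables of a maximal-degree monomial, i.e., setting $\g(x_1,\dots,x_d)=f(x_1,\dots,x_d,0,\dots,0)$, which is still a nonzero polynomial of degree $d$; Lemma~\ref{lem:hard-shift} is then applied to $\g$, so the shift lives in $\zoo^d$ and the minority side has at most $d/2$ coordinates. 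Your phrase ``the $d$ informative coordinates'' suggests you sense this, but the restriction step is what makes it true and needs to be stated. With that addition, your induction (keep a coordinate when the shift matches the target of $F_1$ or $F_2$, pay a factor $2$ via Lemma~\ref{lem:restrictions} when it does not) is exactly the paper's argument.
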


\begin{proof}
{
To see how the last statement follows from the lower bound
(\ref{eqn:rk-lower-bound}), note that $\max\{D(F_1),D(F_2)\}\leq
2\dt(f)$ by Theorem~\ref{thm:classical-and-or} and 
$\dt(f)\leq O(\deg(f)^3)$ by Theorem~\ref{thm:dt-deg}.  In
the remainder of the proof, we focus on (\ref{eqn:rk-lower-bound})
alone.

We assume that $d\geq1,$ the claim being trivial otherwise.
By renumbering the variables if necessary, we may write
\[ f(x) = \alpha x_1x_2\cdots x_d + \sum_{S\ne\{1,\dots,d\}}
\alpha_S \prod_{i\in S} x_i, \]
where $\alpha\ne 0.$  Define $\g(x_1,\dots,x_d)=f(x_1,\dots,x_d,0,\dots,0).$
Then $\g$ is a nonzero polynomial of degree $d,$ and
Lemma~\ref{lem:hard-shift} yields a vector $z\in\zoo^d$ such that
\[ 
  \mon(\g_z)\geq \left(\frac32\right)^d.
\]
By renumbering the variables if necessary, we may assume that
$z=0^t1^{d-t}.$ We complete the proof by analyzing the cases $t\leq
d/2$ and $t>d/2.$

Suppose first that $t\leq d/2.$ Let $\mathcal F$ be the set whose elements
are the identity function on $\zoo$ and the constant-one function
on $\zoo.$ Lemma~\ref{lem:restrictions} provides functions
$\phi_1,\dots,\phi_t\in\mathcal{F}$ such that the polynomial 
$h(x_1,\dots,x_d) 
 = \g_{1^d}(\phi_1(x_1),\dots,\phi_t(x_t),x_{t+1},\dots,x_{d})$
features at least $2^{-t} \mon(\g_z)\geq (3/\{2\sqrt 2\})^d$
monomials. By Theorem~\ref{thm:rk-mon}, the matrix $H=[h(x\wedge
y)]_{x,y\in\zoo^d}$ has rank at least $(3/\{2\sqrt 2\})^d.$ Since $H$
is a submatrix of $F_2,$ the theorem holds in this case.

The case $t>d/2$ is entirely symmetric, with $F_1$ playing the role of
$F_2.$
}
\end{proof}

\begin{remark}
{\rm 
By the results of Buhrman and de Wolf~\cite{buhrman-dewolf01polynomials},
Theorem~\ref{thm:rk-lower-bound} alone would suffice to obtain a
polynomial relationship between classical and quantum communication
complexity in the \emph{exact} model. However, for our main result
we need a polynomial relationship in the \emph{bounded-error} model,
which requires the full development
of Sections~\ref{sec:combinatorial}--\ref{sec:qce}.
}
\end{remark}

\section{Results for Composed Functions} \label{sec:composed}

Up to this point, we have focused on the communication problem of
computing \mbox{$f(x\wedge y)$} and $f(x\vee y).$ Here we point out 
that our results on quantum-classical equivalence and the log-rank
conjecture immediately apply to a broader class of communication problems.
Specifically, we will consider compositions of the form
$f(g_1(x^{(1)},y^{(1)}),\dots,g_n(x^{(n)},y^{(n)})),$ where one has a combining
function $f\colon \zoon\to\moo$ that receives input from intermediate
functions $\g_i\colon X_i\times Y_i\to\zoo,$ $i=1,2,\dots,n.$ We
will show that under natural assumptions on $\g_1,\dots,\g_n,$ this
composed function will have polynomially related quantum and classical
bounded-error complexities and will satisfy the log-rank conjecture.
To simplify notation, we will henceforth 
abbreviate $f(g_1(x^{(1)},y^{(1)}),\dots,g_n(x^{(n)},y^{(n)}))$
to $f(\dots,g_i(x^{(i)},y^{(i)}),\dots).$

\begin{theorem}
Let $f\colon\zoon\to\moo$ be a given function. Fix functions $\g_i\colon X_i\times
Y_i\to\zoo,$ for $i=1,2,\dots,n.$ Assume that for each $i,$ the
matrix $[\g_i(x^{(i)},y^{(i)})]_{x^{(i)}\in X_i,y^{(i)}\in Y_i}$
contains the following submatrices
\begin{align}
\begin{bmatrix}
1&0\\
0&0
\end{bmatrix},\qquad
\begin{bmatrix}
0&1\\
1&1
\end{bmatrix},
\label{eqn:submatrices}
\end{align}
up to permutations of rows and columns.
Put $F=[f(\dots,\g_i(x^{(i)},y^{(i)}),\dots)].$ 
Assume that for some constant $\alpha>0,$
\begin{align}
Q^*_{1/3}(\g_i)\geq R_{1/3}(\g_i)^\alpha, \qquad i=1,2,\dots,n.
\label{eqn:g_i-equivalence}
\end{align}
Then for some constant $\beta=\beta(\alpha)>0,$
\begin{align*}
R_{1/3}(F) \geq Q_{1/3}^*(F)\geq R_{1/3}(F)^\beta.
\end{align*}
   \label{thm:general-equivalence}
\end{theorem}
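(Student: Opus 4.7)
The plan is to chain together three ingredients: (1) a standard upper bound on $R_{1/3}(F)$ via the decision-tree simulation of $f$, (2) the lower bound on $Q^*_{1/3}(F)$ coming from Theorem~\ref{thm:main}, which becomes applicable because the two submatrices listed in~(\ref{eqn:submatrices}) are precisely the AND and OR gadgets, and (3) a trivial reduction showing $Q^*_{1/3}(F)\geq Q^*_{1/3}(\g_i)$ for every coordinate $i$ on which $f$ actually depends. The inequality $R_{1/3}(F)\geq Q^*_{1/3}(F)$ is immediate since prior entanglement simulates public coins, so only the reverse direction requires work.

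First I would establish the upper bound. Fix an optimal depth-$\dt(f)$ decision tree for $f$ and let $\mathrm{deps}(f)$ be the coordinates on which $f$ depends. Alice and Bob traverse the tree; at a node labeled $i$ they evaluate $\g_i(x^{(i)},y^{(i)})$ with the randomized protocol for $\g_i$, amplified to error $1/(10\dt(f))$ by repetition at cost $O(\log\dt(f))$. A union bound yields total error below $1/3$, so
\[
R_{1/3}(F)\;\leq\; O\bigl(\dt(f)\log\dt(f)\bigr)\cdot\max_{i\in\mathrm{deps}(f)} R_{1/3}(\g_i)\;\leq\; O\bigl(\bs(f)^3\log\bs(f)\bigr)\cdot\max_{i\in\mathrm{deps}(f)} R_{1/3}(\g_i),
\]
where the last step uses Theorem~\ref{thm:beals}.

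Next I would produce two complementary quantum lower bounds on $F$. The submatrix $[\smash{\begin{smallmatrix}1&0\\0&0\end{smallmatrix}}]$ (up to permutation) inside $[\g_i(x^{(i)},y^{(i)})]$ provides inputs $x^{(i)}_a\in X_i,\,y^{(i)}_b\in Y_i$ with $\g_i(x^{(i)}_a,y^{(i)}_b)=a\wedge b$, so restricting Alice's rows to $\{(x^{(1)}_{a_1},\dots,x^{(n)}_{a_n}):a\in\zoon\}$ and Bob's columns analogously exhibits $F_1=[f(x\wedge y)]$ as a submatrix of $F$. The other submatrix gives $F_2=[f(x\vee y)]$ as a submatrix in the same way. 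Thus by Theorem~\ref{thm:quantum-and-or},
\[
Q^*_{1/3}(F)\;\geq\;\max\{Q^*_{1/3}(F_1),Q^*_{1/3}(F_2)\}\;\geq\;\Omega(\bs(f)^{1/4}).
\]
Separately, for any $i\in\mathrm{deps}(f)$ there exist $u,v\in\zoon$ differing only in coordinate $i$ with $f(u)\neq f(v)$; since each $\g_j$ takes both Boolean values (another consequence of the submatrix assumption), Alice and Bob can hard-code inputs making $\g_j$ output $u_j$ for $j\neq i$, reducing $\g_i$ to $F$ and yielding $Q^*_{1/3}(F)\geq Q^*_{1/3}(\g_i)$.

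Combining the pieces with the hypothesis $Q^*_{1/3}(\g_i)\geq R_{1/3}(\g_i)^\alpha$:
\[
R_{1/3}(F)\;\leq\;O(\bs(f)^3\log\bs(f))\cdot\max_i Q^*_{1/3}(\g_i)^{1/\alpha}\;\leq\;Q^*_{1/3}(F)^{12+o(1)}\cdot Q^*_{1/3}(F)^{1/\alpha},
\]
giving $\beta=\beta(\alpha)>0$ as required. There is no genuine technical obstacle here: the main result has already done the hard analytic work. The care needed is bookkeeping, namely verifying that the AND/OR embedding of $F_1,F_2$ is clean (which it is, because~(\ref{eqn:submatrices}) lists exactly these two $2\times 2$ truth tables), that error amplification inside the decision-tree protocol blows up costs only polylogarithmically, and that coordinates on which $f$ does not depend can be excised so that a single hard $\g_i$ not seen by $f$ cannot spoil the upper bound.
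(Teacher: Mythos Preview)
Your argument is correct and follows essentially the same three-step template as the paper: a decision-tree simulation for the classical upper bound, a block-sensitivity quantum lower bound via the AND/OR embeddings afforded by~(\ref{eqn:submatrices}), and the coordinate-wise reduction $Q^*_{1/3}(F)\geq Q^*_{1/3}(\g_i)$ for relevant $i$.

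The one noteworthy difference is which quantum lower bound you invoke. You embed $F_1=[f(x\wedge y)]$ and $F_2=[f(x\vee y)]$ separately and appeal to Theorem~\ref{thm:quantum-and-or}, obtaining $Q^*_{1/3}(F)\geq\Omega(\bs(f)^{1/4})$. The paper instead cites Corollary~\ref{cor:new-quantum-lb}, which gives the stronger bound $Q^*_{1/3}(F)\geq\Omega(\sqrt{\bs(f)})$. The reason the stronger corollary applies here is that the presence of \emph{both} submatrices in~(\ref{eqn:submatrices}) lets you embed not just $[f(x\wedge y)]$ or $[f(x\vee y)]$ but in fact $[f_z(x\wedge y)]$ for \emph{any} fixed $z\in\zoon$: on coordinate $i$ with $z_i=0$ use the AND gadget, and on coordinate $i$ with $z_i=1$ use the OR gadget with bits negated, since $\overline{a\wedge b}=\overline a\vee\overline b$. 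This buys a quantitatively better $\beta$ (exponent roughly $6+1/\alpha$ in the paper versus $12+1/\alpha$ in your write-up) but does not change the qualitative conclusion.
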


\begin{proof}
{
Without loss of generality, we may assume that $f$ depends on all of its
$n$ inputs (otherwise, disregard any irrelevant inputs from among
$\g_1,\dots,\g_n$ in the analysis below). In particular, we have
\begin{align}
Q_{1/3}^*(F) \geq  Q_{1/3}^*(\g_i),\qquad i=1,2,\dots,n.
\label{eqn:Q-projection}
\end{align}
Since each $\g_i$ contains the two-variable functions AND and OR 
as subfunctions, Corollary~\ref{cor:new-quantum-lb} shows that
\begin{align}
Q_{1/3}^*(F) \geq \Omega(\sqrt{\bs(f)}).
\label{eqn:F-quantum-lb}
\end{align}
Letting $d=\dt(f),$ we claim that 
\begin{align}
R_{1/3}(F) \leq  O(d\log d) \max_{i=1,\dots,n}\{ R_{1/3}(\g_i)\}.
\label{eqn:F-classical-ub}
\end{align}
The proof of this bound is closely analogous to that of
Theorem~\ref{thm:classical-and-or}. Namely, Alice and Bob evaluate
a depth-$d$ decision tree for $f.$ When a tree node calls for the
$i$th variable, the parties run an optimal randomized protocol for
$\g_i$ with error probability $\frac1{3d},$ which requires at most
$O(R_{1/3}(\g_i)\log d)$ bits of communication. Since all root-to-leaf paths
have length at most $d,$ the final answer will be correct with
probability at least $2/3.$

In view of Theorem~\ref{thm:beals}, the sought polynomial relationship 
between $R_{1/3}(F)$ and $Q_{1/3}^*(F)$ follows from
(\ref{eqn:g_i-equivalence})--(\ref{eqn:F-classical-ub}).
}
\end{proof}

\vspace{\proofskip}

We now record an analogous result for the log-rank conjecture.

\begin{theorem}
Let $f\colon\zoon\to\moo$ be a given function. Fix functions $\g_i\colon X_i\times
Y_i\to\zoo,$ for $i=1,2,\dots,n.$ Assume that for each $i,$ the
matrix $[\g_i(x^{(i)},y^{(i)})]_{x^{(i)}\in X_i,y^{(i)}\in Y_i}$
contains \textup{(\ref{eqn:submatrices})} as submatrices, 
up to permutations of rows and columns.
Assume that for some constant $c>0,$
\begin{align}
D(\g_i)\leq (\log_2 \rk G_i)^c + c, \qquad i=1,2,\dots,n,
\label{eqn:g_i-logrank}
\end{align}
where
$G_i=[(-1)^{\g_i(x^{(i)},y^{(i)})}]_{x^{(i)}\in X_i,y^{(i)}\in Y_i}.$
Then 
the matrix $F=[f(\dots,\g_i(x^{(i)},y^{(i)}),\dots)]$ 
obeys
\begin{align*}
D(F)\leq (\log_2 \rk F)^C + C
\end{align*}
for some constant $C=C(c)>0.$
In particular, $F$ satisfies the log-rank conjecture.
   \label{thm:general-logrank}
\end{theorem}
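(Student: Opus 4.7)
My plan is to adapt the proof of Theorem~\ref{thm:general-equivalence} to the log-rank setting, replacing communication lower bounds by rank lower bounds. The strategy has three stages: (i) exhibit $F_1=[f(x\wedge y)]$, $F_2=[f(x\vee y)]$, and each $G_i$ as submatrices of $F$, which forces $\rk F$ to be large; (ii) upper-bound $D(F)$ by the standard decision-tree simulation; and (iii) combine the two using Theorems~\ref{thm:rk-lower-bound} and~\ref{thm:dt-deg}.

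For step (i), I assume without loss of generality that $f$ depends on all $n$ of its inputs, dropping the redundant $\g_i$ otherwise. For each $i$, since every $\g_j$ attains both values $0$ and $1$ (witnessed by either submatrix in (\ref{eqn:submatrices})) and since $f$ is sensitive to its $i$-th input at some fixing of the other coordinates, I can fix the $x^{(j)}, y^{(j)}$ for $j\neq i$ to suitable constants so that the resulting submatrix of $F$ equals $\pm G_i$; hence $\rk G_i \leq \rk F$. Likewise, restricting each pair $(x^{(i)}, y^{(i)})$ to the two rows and two columns furnishing the first submatrix in (\ref{eqn:submatrices}) exhibits $F_1$ as a submatrix of $F$, and using the second submatrix exhibits $F_2$ as a submatrix. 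Theorem~\ref{thm:rk-lower-bound} then gives $\rk F \geq \max\{\rk F_1, \rk F_2\} \geq 1.06^{\deg(f)}$, so $\deg(f) = O(\log_2 \rk F)$.

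Step (ii) is the now-familiar decision-tree simulation used in Theorems~\ref{thm:classical-and-or} and~\ref{thm:general-equivalence}: Alice and Bob traverse an optimal depth-$\dt(f)$ decision tree for $f$, at each queried variable $i$ running an optimal deterministic protocol for $\g_i$, giving $D(F) \leq \dt(f) \cdot \max_i D(\g_i)$. Combining this with Theorem~\ref{thm:dt-deg}, the degree bound from step (i), the hypothesis $D(\g_i) \leq (\log_2 \rk G_i)^c + c$, and $\rk G_i \leq \rk F$, I obtain
\[ D(F) \leq O\bigl((\log_2 \rk F)^3\bigr)\cdot\bigl((\log_2 \rk F)^c + c\bigr), \]
which is at most $(\log_2 \rk F)^C + C$ for any sufficiently large constant $C = C(c)$, as required.

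The argument is essentially mechanical once Theorem~\ref{thm:rk-lower-bound} is in hand; the only minor subtlety is the reduction to the case where $f$ depends on all its inputs, so that each $G_i$ is genuinely a submatrix of $F$. The main conceptual content mirrors that of Theorem~\ref{thm:general-equivalence}: under hypothesis (\ref{eqn:submatrices}), the composed matrix $F$ is rich enough to contain $F_1$, $F_2$, and every $G_i$ as submatrices, which is precisely what drives the rank lower bound.
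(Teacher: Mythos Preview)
Your proof is correct and follows essentially the same approach as the paper's: reduce to the case where $f$ depends on all inputs, exhibit $F_1$, $F_2$, and each $G_i$ as submatrices of $F$ to obtain the rank lower bounds via Theorem~\ref{thm:rk-lower-bound}, upper-bound $D(F)$ by the decision-tree simulation, and combine using Theorem~\ref{thm:dt-deg}. Your write-up is in fact more explicit than the paper's in justifying why $F_1$, $F_2$, and the $\pm G_i$ appear as submatrices of $F$.
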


\begin{proof}
{
Without loss of generality, we may assume that $f$ depends on all of its
$n$ inputs (otherwise, disregard any irrelevant inputs from among
$\g_1,\dots,\g_n$ in the analysis below). In particular, we have
\begin{align}
\rk F \geq \rk G_i, \qquad i=1,2,\dots,n.
\label{eqn:F-logrank-projection}
\end{align}
Since each $\g_i$ contains the two-variable functions AND and OR 
as subfunctions, Theorem~\ref{thm:rk-lower-bound} shows that
\begin{align}
\rk F \geq \left(\frac3{2\sqrt 2}\right)^{\deg(f)}.
\label{eqn:F-and-or}
\end{align}
Finally, we claim that 
\begin{align}
D(F) \leq  2\dt(f) \max_{i=1,\dots,n}\{ D(\g_i)\}.
\label{eqn:F-deterministic-ub}
\end{align}
The proof of this bound is closely analogous to that of
Theorem~\ref{thm:classical-and-or}. Namely, Alice and Bob evaluate
an optimal-depth decision tree for $f.$ When a tree node calls for the
$i$th variable, the parties run an optimal deterministic protocol for
$\g_i.$ 

In view of (\ref{eqn:g_i-logrank})--(\ref{eqn:F-deterministic-ub}) and
Theorem~\ref{thm:dt-deg}, the proof is complete.
}
\end{proof}

The key property of $\g_1,\dots,\g_n$ that we have used in this
section is that their communication matrices contain
(\ref{eqn:submatrices}) as submatrices. We close this section by
observing that this property almost always holds.  More precisely,
we show that matrices that do not contain the submatrices
(\ref{eqn:submatrices}) have a very restricted structure.

\begin{theorem}
\label{thm:01}
A matrix $G\in\zoo^{N\times M}$ does not contain 
\begin{align*}
A = \begin{bmatrix}
0&1\\
1&1
\end{bmatrix}
\end{align*}
as a submatrix if and only if\; $G=0,$ \;$G=J,$ \;or
\begin{align}
G' \sim 
\begin{bmatrix}
\begin{aligned}
\begin{matrix}
J_1\\
&J_2\\
&&J_3\\
\end{matrix}
\end{aligned} & \scalebox{1.7}{$0$}\\
\begin{matrix}
\\
\scalebox{1.7}{$0$~}
\end{matrix} & \begin{aligned}
\begin{matrix}
\ddots\\
&J_k\\
\end{matrix}
\end{aligned}
\end{bmatrix}, \qquad
\label{eqn:G-characterization}
\end{align}
where: $G'$ is the result of deleting any columns and rows in $G$
that consist entirely of zeroes; $J,J_1,J_2,\dots,J_k$ are
all-$1$ matrices of appropriate dimensions; and $\sim$ denotes
equality up to permutations of rows and columns.
\end{theorem}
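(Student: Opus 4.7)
The plan is to prove the biconditional separately in each direction. For the forward direction---that matrices of each of the three canonical forms do not contain $A$ (as a $2\times 2$ submatrix, up to row/column permutations)---I would argue by a direct case check. The cases $G=0$ and $G=J$ are immediate. For the block-diagonal form in~\eqref{eqn:G-characterization}, any $2\times 2$ submatrix picks two rows and two columns; since each row of $G'$ is constant (either $1$ or $0$) on the columns of each block, the submatrix contains either four $1$s (if all four positions lie in a single $J_\ell$) or at most two $1$s (in every other configuration), but never exactly three. Restoring the all-zero rows and columns of $G$ can only introduce additional $0$s and cannot create a new three-$1$ pattern.

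For the converse, assume $G$ avoids $A$ up to row/column permutations, and handle the nontrivial case $G\ne 0,\,J$ by passing to $G'$. The heart of the proof is the following structural lemma: if two distinct rows $r_1,r_2$ of $G'$ both contain a $1$ in some common column $c$, then $r_1$ and $r_2$ have identical support. Indeed, if $r_1(c')\ne r_2(c')$ for some column $c'$, then the $2\times 2$ submatrix on rows $\{r_1,r_2\}$ and columns $\{c,c'\}$ contains exactly three $1$s and one $0$, i.e., $A$ up to row/column permutation---contradicting the hypothesis on $G$.

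This lemma makes the relation ``$r\sim r'$ iff $r$ and $r'$ share a $1$ in some column'' an equivalence relation on the rows of $G'$: transitivity follows because equivalent rows have identical support. Let $R_1,\ldots,R_k$ be the resulting equivalence classes, and let $S_i$ be the common support of the rows in $R_i$. The $S_i$ are pairwise disjoint (a column in $S_i\cap S_j$ would force a row of $R_i$ and a row of $R_j$ to be equivalent). Since $G'$ has no all-zero rows, every row lies in some $R_i$; since $G'$ has no all-zero columns, every column lies in some $S_i$, so $\{S_i\}$ partitions the columns of $G'$. Permuting rows to group each $R_i$ and columns to group each $S_i$ then yields the displayed block-diagonal form, with $J_i$ the all-ones block of dimensions $|R_i|\times|S_i|$. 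The whole argument hinges on the structural lemma; the only real obstacle is the bookkeeping of permutations in its proof, which is routine once one recognizes that $A$ is characterized up to row/column permutation precisely as the $2\times 2$ binary matrices of Hamming weight three.
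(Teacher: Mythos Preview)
Your proof is correct, but it takes a different route from the paper's. The paper argues the nontrivial direction by induction on the number of columns: it selects a maximal all-$1$ submatrix $J_1$ of $G$, writes $G\sim\begin{bmatrix}J_1&Z_1\\Z_2&H\end{bmatrix}$ after permuting rows and columns, observes that maximality of $J_1$ together with the absence of $A$ forces $Z_1=0$ and $Z_2=0$ (or empty), and then invokes the inductive hypothesis on $H$. Your argument is instead a direct, induction-free structural decomposition: the key lemma (rows sharing a $1$ in some column must have identical support) immediately yields an equivalence relation on the rows of $G'$, and the blocks $J_i$ are recovered as the classes. Your approach is slightly more explicit about \emph{why} the block structure arises---it identifies the blocks intrinsically as support classes rather than peeling them off one at a time---whereas the paper's inductive proof is a bit terser but leaves the reader to unwind the recursion to see the final picture. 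Both are equally valid; neither has a real advantage in generality.
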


\begin{proof}
{
The ``if'' part is clear.  We will prove the other direction by 
induction on the number of columns, $M.$ The base case is trivial.
For the inductive step, let $G\ne 0$ be a given matrix. Let $J_1$
be a maximal submatrix of $G$ with all entries equal to $1.$ Then
\begin{align*}
G \sim 
\begin{bmatrix}
J_1 & Z_1\\
Z_2 & H
\end{bmatrix}
\end{align*}
for suitable matrices $Z_1,Z_2,$ and $H,$ possibly empty. By the
maximality of $J_1$ and the fact that $G$ does not contain $A$ as
a submatrix, it follows that either $Z_1$ is empty or $Z_1=0.$
Likewise for $Z_2.$ By the inductive hypothesis for $H,$ the proof
is complete.
}
\end{proof}

By reversing the roles of $0$ and $1,$ one obtains from 
Theorem~\ref{thm:01} an analogous characterization of all matrices
$G=\zoo^{N\times M}$ that do not contain 
\begin{align*}
\begin{bmatrix}
1&0\\
0&0
\end{bmatrix}
\end{align*}
as a submatrix. 
%In particular, we have the following corollary.
%
%\begin{corollary}[Sherstov~\cite{sherstov09mask}]
%Let $G\in\zoo^{N\times M}$ be a matrix that does not contain either of the
%matrices
%\textup{(\ref{eqn:submatrices})} as a submatrix. Then either $G=0$ or
%$G=J$ or
%\begin{align*}
%G \sim \begin{bmatrix}
%J_1&0\\
%0&J_2
%\end{bmatrix},
%\end{align*}
%where $J,J_1,J_2$ are all-$1$ matrices of appropriate dimensions,
%and $\sim$ stands for equality up to a permutation of rows and
%columns.
%\end{corollary}

\begin{remark}
{\rm 
The communication complexity of a Boolean matrix remains unaffected if
one modifies it to retain only one copy of each column, removing any
duplicates. An analogous statement holds for the rows.
In light of Theorem~\ref{thm:01}, this means
that there are only four types of intermediate functions $\g$ for
which our composition results (Theorem~\ref{thm:general-equivalence}
and~\ref{thm:general-logrank}) fail.  These are the functions $\g$ with
matrix representations
\begin{align}
I, \qquad 
\begin{bmatrix}
I\\
& 0
\end{bmatrix},
\qquad
\begin{bmatrix}
I\\
0 
\end{bmatrix},
\qquad
\begin{bmatrix}
I & 0
\end{bmatrix},
\label{eqn:four-matrix-types}
\end{align}
and their negations, where $I$ is the identity matrix.
%These four cases are quite interesting.  
%The cases $I$ and $J-I$
%correspond precisely to letting $\g$ be the {\sc equality} function
%or its negation. 
The reason that Theorems~\ref{thm:general-equivalence}
and~\ref{thm:general-logrank} fail for such $\g$ is that the
underlying quantum lower bound in terms of block sensitivity of
the combining function $f$ is no longer valid. For example, the
first matrix type, $I,$ corresponds to letting $\g$ be the
equality function. Now, the conjunction of $n$ equality functions
is still an equality function, and its communication complexity
is $O(1)$ both in the randomized and quantum
models~\cite{ccbook}, which is much less than a hypothetical
lower bound of $\Omega(\sqrt n)$ that one would expect from the
block sensitivity of $f=\AND_n.$ The same $O(1)$ upper
bound holds for a conjunction of arbitrarily many functions $\g$
of the second, third, and fourth type. 
}
\end{remark}

%\section{Multiparty Extensions}
%\label{sec:multiparty}
%
%As surveyed in~\cite{dual-survey}, the pattern matrix
%method has been adapted
%to the multiparty model by several authors. The ideas of this paper,
%too, readily adapt to allow multiple players.  For this, one uses
%the combinatorial analysis in~\cite{sherstov07ac-majmaj} instead
%of the spectral development in~\cite{sherstov07quantum} and
%Section~\ref{sec:qce}.  While the combinatorial route gives somewhat
%weaker bounds, it admits a \emph{distinct} dual polynomial
%for each row of the communication matrix, which contrasts with the
%need for a \emph{common} dual polynomial in the spectral development
%of this paper.  We defer the relevant technical details and any
%additional extensions to the full version of this paper, the
%fundamental technique being quite clear from the two-party case.

\nonumsection{Acknowledgments}
\noindent
The author would like to thank 
Dima Gavinsky, 
Adam Klivans,
Sasha Razborov, 
and the anonymous reviewers for their useful
comments on a preliminary version of this paper.

\nonumsection{References}
\bibliographystyle{abbrv}
\bibliography{refs}

\end{document}